\title{AWLCO: All-Window Length Co-Occurrence}
\titlerunning{AWLCO: All-Window Length Co-occurrence}
\author{Joshua Sobel}{Department of Computer Science, University of Rochester }{jsobel3@u.rochester.edu}{}{}
\author{Noah Betram}{Department of Mathematics, University of Rochester }{nbertram@u.rochester.edu}{}{}
\author{Chen Ding}{Department of Computer Science, University of Rochester }{cding@cs.rochester.edu}{}{}
\author{Fatemeh Nargesian}{Department of Computer Science, University of Rochester }{fnargesian@cs.rochester.edu}{}{}
\author{Daniel Gildea}{Department of Computer Science, University of Rochester }{gildea@cs.rochester.edu}{}{}
\authorrunning{J. Sobel, N. Bertram, C. Ding, F. Nargesian, D. Gildea} 
\keywords{Itemsets, Data Sequences, Co-occurrence} 
\pgfplotsset{width=4cm,compat=1.9}
\DeclareMathOperator{\co}{{ \it co-occurrence}}
\newcommand{\tesla}{\ensuremath{\mathrm{tesla}}}
\newcommand{\btesla}{\ensuremath{\mathrm{btesla}}}
\newcommand{\name}{{\tt AWLCO}}
\DeclareMathOperator{\Tr}{\text{Tr}}
\DeclarePairedDelimiter\floor{\lfloor}{\rfloor}
\newcommand{\eat}[1]{}
\newenvironment{proofsketch}[1][Proof Sketch.]{\par
  \pushQED{\qed}%
  \normalfont 
  \trivlist
  \item[\hskip\labelsep
        \color{lipicsGray}\sffamily\bfseries
    #1
    ]\ignorespaces}{
  \popQED\endtrivlist
}
\begin{document}

\maketitle

\begin{abstract}
    Analyzing patterns in a sequence of events has applications in text analysis, computer programming, and genomics research.  
    In this paper, we consider the {\em all-window-length} analysis model which analyzes a sequence of events with respect to windows of all lengths. 
    We study the exact co-occurrence counting problem for the all-window-length analysis model.  
    Our first algorithm is an offline algorithm that  
    counts all-window-length co-occurrences by performing 
    multiple passes over a sequence and computing single-window-length co-occurrences.  
    This algorithm has the time complexity $O(n)$ for each window length and thus a total complexity of $O(n^2)$ and the space complexity $O(|I|)$ for 
    a sequence of size $n$ and an itemset of size $|I|$. 
    We propose \name{}, an online algorithm that computes 
    all-window-length co-occurrences in a single pass with the expected time complexity of $O(n)$ and space complexity of $O(\sqrt{n|I|})$. Following this, we generalize our use case to patterns in which we propose an algorithm that computes all-window-length co-occurrence with expected time complexity $O(n|I|)$ and space complexity $O(\sqrt{n|I|} + e_{max}|I|)$, where $e_{max}$ is the length of the largest pattern.
\end{abstract}

\section{Introduction}
\label{sec:intro}

Analyzing regularities in streams and event sequences has applications in data analytics as well as programming languages, natural language processing, and genomics. 
Examples of a event sequence include a sequence of system logs, memory requests by a program, tweets by a user, 
a series of symptoms, a sequence of words in a document, or an RNA sequence.  
One metric of regularity is {\em co-occurrence}~\cite{Liu:2018:TWC:3291291.3291322,YuY0LP15} --- the number of times that an entire set of items or more broadly of patterns is contained within a sliding window of an arbitrary size. 
For example, 
consider the  sequence ``$abccba$'' 
and window size three. This sequence of events  contains four such windows: ``$abc$'', ``$bcc$'', ``$ccb$'', and ``$cba$''. 
We see that both ``$a$'' and ``$b$''  appear together in two windows. 
Thus, itemset $\{a, b\}$ co-occurs twice for window size of three.  In the sequence ``\emph{cat dog cat}'' with window size seven, we see that the words, referred to as patterns, ``$cat$'' and ``$dog$'' both appear as substrings in two windows, and thus the pattern set  $\{cat, dog\}$ have co-occurrence of two with window size seven.  

Most applications assume that the window is given by a user or defined in an adhoc manner.  
Existing counting algorithms for streams often assume the {\em sliding-window} model of computation, that is answering queries or mining is done over the last $w$ most recent data elements~\cite{DBLP:books/sp/16/DatarM16,DBLP:journals/siamcomp/DatarGIM02}.  
Successful pattern-searching tools, such as {\em ShapeSearch}, enable the search for desired patterns within a fixed window size in trendlines~\cite{SiddiquiLWKP20}. 
However, in certain applications of co-occurrence analysis, 
the query is about identifying the time windows that satisfy certain conditions on the co-occurrence.  
For instance, in text analysis, what is the time window in which 
a set of events are very likely to appear? 
Or, at which time window does the co-occurrence of a set of words in a document become random? 
Or, how often do two or multiple gene expression patterns co-occur in an RNA sequence?  
These applications require the analysis of all possible window lengths, possibly as large as the size of the sequence.  

{\bf The All-Window-Length Analysis Model}
In this paper, we consider a new analysis model of computation for streams and sequences, the {\em all-window-length analysis model}, where the analysis of a sequence of data elements is done in {\em one pass} for all window lengths,  starting from the size of a pattern up to the size of a sequence. Unlike single-window-length analysis, in this model, window length becomes a variable. 
We consider the co-occurrence counting of items and patterns in this analysis model. 
A pattern is a string with characters drawn from alphabet $\mathbb{A}$. 
Given a sequence $T$ of size $n$, and an itemset $I$ consisting of patterns, find the the number of windows in which every pattern in $I$ occurs for all window lengths $x\in\{1, \ldots, n\}$ in  $T$. 
This model enables us to perform analysis without apriori knowledge of window-size, 
i.e. a window size can be chosen and analyzed on demand at query time. 
For a sequence $T$ of size $n$ and an itemset $I$ consisting of $|I|$ unique tokens, the co-occurrence analysis considers $\sum_{x=1}^{n}(n - x + 1)$ windows. 
We propose efficient exact algorithms and theoretical analysis for the co-occurrence counting of sets of items and patterns under  this  analysis model. 
Note that this analysis model is different than the setting of counting frequent itemset in a stream, in which data elements arrive in baskets of arbitrary lengths and the goal is to find the itemset that appears in $s$ fraction of the baskets, where $s$ is a support threshold~\cite{rajaraman2014mining,MankuM02,AgrawalS94,AgrawalIS93}. 

{\bf Applications} We expect the all-window-length analysis model to open research  opportunities that lead to solving problems in natural language processing, the optimization of the memory layout of programs, and accelerating the search for RNA sequences in genomes. 
In natural language processing, the co-occurrence of words within a sliding window is the basis for training word embeddings,
which are vector representations of a word's meaning and usage
\cite{mikolov2013distributed,pennington-socher-manning:2014:EMNLP2014}. 
Different window sizes are useful for different purposes;
embeddings derived from
smaller windows tend to represent syntactic information while
larger windows represent semantic information~\cite{Schutze97}.
Identifying an effective window length for training word embeddings 
requires the efficient exploration of the relationship between window size and co-occurrence frequency of words~\cite{LevyG14}. 

The application of all-window-length co-occurrence analysis in programming languages  is in the optimization of the  memory layout of programs.  Modern processor performance is dependent on cache performance and cache block utilization. A set of data elements belong to the same {\em affinity group} if they are always accessed close to each other.  This closeness is defined by $k$-linkedness. A reference affinity forms a unique partition of data for every $k$, and the relation between different $k$s is hierarchical, meaning the affinity groups at link length $k$ are a finer partition of the groups at $k+1$. Reference affinity has been used to optimize the memory layout in data structure splitting~\cite{Zhong+:PLDI04}, whole-program code layout~\cite{Lavaee+:CC19}, and both~\cite{Zhang+:POPL06}. Finding affinity groups requires the analysis of the access co-occurrence of data elements in memory access traces for all $k$s. 

Research has shown that analyzing nucleotide co-occurrence over the entire human genome provides a powerful insight into the evolution of viruses~\cite{Shapiroe,PMID:18032723}. Co-occurrence is a method for tracking cooperative genomic interactions as a major force underlying virus evolution. Existing co-occurrence network construction tools such as {\tt cooccurNet}~\cite{ZouWDWWLJP17} consider pairs of nucleotides or amino acids for analysis and apply filters on the significance of the co-occurrence of genes. The distance in a co-occurrence network counts for the relatedness of genes. An all-window-length analysis of the co-occurrence gene sequences provides further insight into pattern analysis in genomics.

{\bf Results} In this paper, we propose an efficient algorithm that computes all-window-length exact co-occurrence of patterns in a single pass. 
For co-occurrence of itemsets of size one or two, 
our past work proposed a linear time algorithm (in sequence length) to compute co-occurrence 
for all window lengths~\cite{Liu:2018:TWC:3291291.3291322}.
To analyze co-occurrence, first, we introduce an algorithm to calculate co-occurrence that runs in $O(n)$ time, is easily understood, and uses $O(|I|)$ space for single-window-length co-occurrence, where $n$ is the length of the sequence, and $I$ is the set of co-occurring items.  
However, to find the co-occurrence across all window lengths the algorithm would require to compute the co-occurrence for each window length separately and use $O(n^2)$ time which is impractical for large datasets.  

We propose \name{}, a time- and space-efficient algorithm that computes the exact co-occurrence of itemsets for all window lengths, 
in a single pass.  The algorithm computes co-occurrence by finding \textit{gaps} in the sequence, or substrings of the sequence that do not contain subsets of the queried pattern.  This is a novel approach to compute co-occurrence and provides an improved algorithm since the stored gaps are not bound to any window lengths, thus, the collection of gaps allows the co-occurrence to be determined for all window lengths in a single pass through the gaps.  
Furthermore, we propose a simple approach  for computing all of the gaps for an itemset in a single pass through the sequence.  The relevant gaps can be found by iterating through the sequence and keeping track of the items and the orders they last appeared.  We theoretically prove that gaps are only relevant and counted if the current item encountered in the sequence is the  item that was seen furthest in the past, thus,  drastically reducing the amount of space and updates needed. 
\name{}  enables all-window-length queries in expected $O(n)$ time by using $O(\sqrt{n|I|})$ additional space, assuming a perfect hashing function. 

Finally, we generalize our problem to finding the co-occurrence of a set of patterns. We argue that finding an algorithm that handles multiple elements at the same index of a sequence would solve all window length pattern co-occurrence. We present an algorithm for pattern co-occurrence counting with the expected time complexity $O(n|I|)$ and space complexity  $O(\sqrt{n|I|} + e_{\max}|I|)$, where $e_{\max}$ is the length of the largest pattern.

\section{Problem Definition}
\label{sec:problemdef}
We begin by fixing a vocabulary $\mathbb{A}$ that we will be working in. 
Let $T$ be a sequence with elements in $\mathbb{A}$. Sequence $T$ can be considered as a stream. Let $n$ be the length of the sequence $T$ and for any natural number $l$, let $[l] = \{1,\ldots,l\}$.  A sequence will have its indices zero indexed, i.e. $T[0]$ is the first element that appeared in the sequence and $T[i]$ is the element that  appeared at position
$i$.  We use $T[i\ldots j]$ to denote a sub-string of $T$. For example, $T[0\ldots j]$  indicates the first $j+1$ elements of sequence $T$.  
An itemset $I$ is a finite non-empty subset of $\mathbb{A}$.  For a sequence $T$, a window is a sub-string of $T$, or a contiguous selection of elements of $T$.  For sequence $T$ we define the window at index $i$ of length $x$ where $x\leq i+1$, \,$\omega(T, i, x)$, to be the window containing the $i$-th element of $T$ and the $x-1$ previous elements of $T$.  When it is clear what sequence is being referenced we will refer simply to $\omega(i, x)$.  For example, for the sequence $T=$``$abcdef$'', 
$\omega(3, 3)$ is ``$bcd$''.  
We define the co-occurrence count as the number of windows of length $x$ in sequence $T$ that contain all elements of the itemset $I$. 

\begin{definition}
{\em Single-window length co-occurrence problem}:  Given a sequence $T$ and an itemset $I$, find the co-occurrence count of itemset $I$ in windows of length $x$ in sequence $T$.
\begin{multline}\label{eq:co-def}
      \co\left(T, I, x \right) = |\{\omega(i, x) \,:\, i \in \{x-1, \dots, n-1\}, \forall e \in I, e \in \omega(i, x) \}|
\end{multline}
\end{definition}

\begin{example}
\label{ex:tesla}
\sloppy{Consider the sequence $T$=``abcabe''. The co-occurrence count of itemset $\{a,b\}$ in all windows with size four, 
$\co
\left(abcabe, \{a, b\}, 4 \right)$,  is three. }
\end{example}

In this paper, we consider the new  problem of finding co-occurrence counts of $I$ in $T$ for all window lengths. 

\begin{definition}
{\em All-window length co-occurrence problem}:  Given a sequence $T$ of size $n$, and an itemset $I$, find the co-occurrence counts of itemset $I$ in all windows of lengths $x\in\{|I|, \ldots, n\}$ in sequence $T$.
\end{definition}
 
In Section~\ref{sec:single-w-alg}, we define a baseline algorithm for finding all window length co-occurrence counts based on finding the single window length co-occurrence count. 
In Section~\ref{sec:all-w-alg}, we describe our algorithm for simultaneously finding co-occurrence counts of all window lengths in expected linear time in the length of the sequence and the space complexity of $O(\sqrt{n|I|})$. 

A pattern is a string with characters drawn from alphabet $\mathbb{A}$. A pattern $e$'s $i $th component is denoted $e[j]$ and the length of the pattern is $|e|$.  A pattern occurs in a sequence $T$ if there exists $j \in \{0,...,n\}$ such that for all $i \in \{0,\ldots ,|e| -1\}$, $T[j + i] = e[i]$. 

\begin{definition}
{\em All-window length pattern co-occurrence problem}:  Given a sequence $T$ of length $n$, and an itemset $I$ consisting of patterns, find the the number of windows in which every pattern in $I$ occurs for all window lengths $x\in\{1, \ldots, n\}$ in sequence $T$.
\end{definition}

\section{Single-Window-Length Co-occurrence}
\label{sec:single-w-alg}

    Consider an item $e\in\mathbb{A}$ and a sequence $T$. The time elapsed since last access of $e$ at index $i$, $\tesla(T,e,i)$, is the difference between $i$ and the greatest index where $e$ occurs in $T$ up to and possibly including $i$, and in the case that there is no occurrence of $e$ in the interval up to $i$ we define it to be $\infty$.  When the choice of $T$ is clear we use the shorthand $\tesla(e,i)$ instead. 
There is a direct connection between the \tesla{} values for items in the itemset and the number of times the items of the itemset co-occur. 
\begin{lemma}Itemset $I$ co-occurs in a window $\omega(i,x)$ if and only if $\max\{\tesla(e,i) | e\in I\}<x$.
\begin{proof}
The statement implies that for each $e \in I$, $\tesla(e, i) < x$, which implies that $e \in \omega(i,x)$.  Conversely, if each $e \in \omega(i,x)$, then we have $\tesla(e, i) < x$; therefore, we have \\ $\max\{tesla(e,i) | e\in I\}<x$.
\end{proof}
\end{lemma}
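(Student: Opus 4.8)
The plan is to reduce the statement to its single-item version and then take a maximum over $I$. First I would unwind the definition of the window: since $\omega(i,x)$ contains $T[i]$ together with the $x-1$ elements preceding it, it is exactly the set of positions $\{i-x+1, i-x+2, \ldots, i\}$ of $T$ (recall $x \le i+1$, so this range is well defined). Hence an item $e$ belongs to $\omega(i,x)$ if and only if $T$ has an occurrence of $e$ at some index $j$ with $i-x+1 \le j \le i$.

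Next I would connect membership in the window with $\tesla(e,i)$. By definition $\tesla(e,i) = i - j^\star$, where $j^\star$ is the largest index $\le i$ at which $e$ occurs, and $\tesla(e,i) = \infty$ if no such index exists. If some occurrence of $e$ lies in $\{i-x+1,\ldots,i\}$ then in particular the most recent one does, so $j^\star \ge i-x+1$ and $\tesla(e,i) = i - j^\star \le x-1 < x$. Conversely, if $\tesla(e,i) < x$ then $\tesla(e,i)$ is finite and equals $i-j^\star$ for a genuine occurrence index $j^\star$; from $i - j^\star < x$ we get $j^\star \ge i-x+1$, so $j^\star$ witnesses $e \in \omega(i,x)$. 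This gives the single-item equivalence: $e \in \omega(i,x)$ holds precisely when $\tesla(e,i) < x$.

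Finally I would assemble the full claim. By definition $I$ co-occurs in $\omega(i,x)$ exactly when every $e \in I$ satisfies $e \in \omega(i,x)$, which by the previous step is the same as $\tesla(e,i) < x$ for every $e \in I$; since $I$ is finite and nonempty, this in turn is equivalent to $\max\{\tesla(e,i) \mid e \in I\} < x$. The degenerate case where some $e \in I$ does not occur in $T[0\ldots i]$ is absorbed smoothly: then $\tesla(e,i) = \infty$, the maximum is $\infty$, the inequality fails, and correspondingly that $e$ --- and hence $I$ --- is missing from the window.

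I do not anticipate a real obstacle: the lemma is essentially a restatement of the definitions of $\omega$ and $\tesla$. The only point that needs care is the off-by-one at the window boundary, namely checking that ``$T[i]$ plus the $x-1$ previous elements'' corresponds to the index interval $[i-x+1,\,i]$ and therefore to the threshold $\tesla(e,i) \le x-1$, i.e.\ the strict inequality $\tesla(e,i) < x$; a secondary but trivial point is ensuring the $\infty$ convention makes both sides of the equivalence fail together.
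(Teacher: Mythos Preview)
Your proposal is correct and follows the same approach as the paper's proof: both directions reduce to the per-item equivalence $e\in\omega(i,x)\iff\tesla(e,i)<x$ and then pass to the maximum over $I$. Your version is simply more explicit about the index interval $[i-x+1,i]$, the strict-inequality boundary, and the $\infty$ convention, all of which the paper leaves implicit.
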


\begin{example}
\label{ex:stack} 
Consider the sequence $T=$``$abcabe$'' and itemset $\{a,b\}$. Suppose we have processed $T[0\ldots 3]$ and we know  $\tesla(a,3)=0$ and $\tesla(b,3)=2$. 
Since the max tesla value is two, the itemset does not co-occur in the size two window $\omega(3,2)$.
\end{example}

By the lemma, the co-occurrence defined in Equation~\ref{eq:co-def} can be computed by iterating through each index of the sequence and counting the number of times $\max\{\tesla(e,i)| e\in I\} < x$.  
\begin{multline}
\label{eq:coocc-tesla}
    \co\left(T, I, x \right) = |\{i \in \{x-1, \dots, n-1\} : 
    \max\{\tesla(e,i) | e\in I\}<x \}|
\end{multline}

\eat{To keep track of the order in which the elements of $I$ were seen and the index they were accessed at, we maintain  $\mathcal{S}_{i} = [e_{1}:\tesla(e_{1},i), \ldots, e_{|I|}:\tesla(e_{|I|},i)]$, such that $\tesla(e_{1},i) < \cdots < \tesla(e_{|I|},i)$. 
Note that in $\mathcal{S}_{i}$, element $e_{l}$ for any $l \leq |I|$ is defined with respect to location $i$ in the sequence,  although for readability we do not include this dependency in our notation. 
We define ${\tt remove}(\mathcal{S}, e_1)$ as an operation that changes $\mathcal{S} = [e_{1}:\tesla(e_{1},i), \ldots, e_{|I|}:\tesla(e_{|I|},i)]$ to 
$\mathcal{S} = [e_{2}:\tesla(e_{2},i), \ldots, e_{|I|}:\tesla(e_{|I|},i)]$ 
and ${\tt add}(\mathcal{S}, e_j, i)$ as an operation that changes $\mathcal{S} = [e_{1}:\tesla(e_{1},i), \ldots, e_{|I|}:\tesla(e_{|I|},i)]$ to $\mathcal{S} = [e_{1}:\tesla(e_{1},i), \ldots, e_j:\tesla(e_j,i), \ldots,  e_{|I|}:\tesla(e_{|I|},i)]$ 
such that  $\tesla(e_{1},i) < \cdots < \tesla(e_{j},i)<\cdots<  \tesla(e_{|I|},i)$.}

\eat{We may use a stack data structure to track $\max\{\tesla(e,i): e\in I\}$ in $O(1)$ time at each step by keeping track of the order the elements $I$ were seen and the index they were last accessed at. For the rest of the paper, we will use a stack extensively so we define some notation here.  Let 
$\mathcal{S}_{i} = [e_{1},\ldots ,e_{|I|}]
$ be the stack at index $i$, ordered such that
$\tesla(e_{1},i) < \tesla(e_{2},i) <\cdots < \tesla(e_{|I|},i) 
$. 
Notice that $e_{l}$ for any $l \leq |I|$ is defined with respect to
the current state of the stack,  $\mathcal{S}_{i}$, although for readability
we do not include this dependency in our notation. 
We define ${\tt remove}(\mathcal{S}, e_1)$ as an operation that changes $\mathcal{S} = [e_{1}:\tesla(e_{1},i), \ldots, e_{|I|}:\tesla(e_{|I|},i)]$ to 
$\mathcal{S} = [e_{2}:\tesla(e_{2},i), \ldots, e_{|I|}:\tesla(e_{|I|},i)]$ 
and ${\tt add}(\mathcal{S}, e_j, i)$ as an operation that changes $\mathcal{S} = [e_{1}:\tesla(e_{1},i), \ldots, e_{|I|}:\tesla(e_{|I|},i)]$ to $\mathcal{S} = [e_{1}:\tesla(e_{1},i), \ldots, e_j:\tesla(e_j,i), \ldots,  e_{|I|}:\tesla(e_{|I|},i)]$ 
such that  $\tesla(e_{1},i) < \cdots < \tesla(e_{j},i)<\cdots<  \tesla(e_{|I|},i)$. }

\noindent{\bf Book Stack.} We now wish to have a systematic way of ordering items according to their corresponding time elapsed since last access. Let $Q$ denote the set of non-empty subsets of $\mathbb{A}$. Let $A$ be some element of $Q$ and suppose that $A=\{e^1,e^2,\ldots, e^{|A|}\}$, and they are labeled in such a way that at index $i$ in our sequence,
\[
    \tesla(e^1,i-1) \leq \tesla(e^2,i-1) \leq \cdots \leq \tesla(e^{|A|},i-1).
\] 
Now let $r_{i}^{j}: Q \to \mathbb{A}$ be given by $r_{i}^{j}(A) = e^j$, for $j \in \{1,...,|A|\}$.
That is to say that, $r$ arranges the members of $A$ in a finite sequence according to $\tesla(\cdot,i-1)$. This notation is robust as it allows for weak ordering and will be used to consider a generalized case later on. 
We call the realization of $r_{i}^{j}$ a {\em book-stack}, i.e. $\mathcal{S}_i = [(r_{i}^{1}(I), \tesla(r_{i}^{1}(I),i)), \ldots, (r_{i}^{|I|}(I), \tesla(r_{i}^{|I|}(I),i))]$ based on the above ordering, given a set $A$. 
We define $\mathcal{S}_i.{\tt retrieve}(j) = \tesla(r_{i}^{j},i)$. 
We define $\mathcal{S}_i.{\tt find} : A \rightarrow \{1, \ldots, |A|\}$, such that ${\tt find}(a) = j$, where $r_{i}^{j}(A) = a$. 
We define $\mathcal{S}_i.{\tt update}: \{1, \ldots, |A|\} \rightarrow \times_{l = 1}^{|A|}A$, in which $\mathcal{S}_i.{\tt update}(j) = (r_{i+1}^{1}(A),\ldots,r_{i+1}^{|A|}(A))$, where we have 
\[
r_{i+1}^l (A) = 
\begin{cases}
r_{i}^{j}(A),\ l = 1\\
r_{i}^{l+1}(A),\ 1 \leq l < j  \\
r_{i}^{j}(A),\ j < l \leq  |A|. \\
\end{cases}
\]
We therefore define $\mathcal{S}_{i+1} = \mathcal{S}_i.{\tt update}(\mathcal{S}_i.{\tt find}(T[i]))$.
It is straightforward to see that the ${\tt update}$ guarantees the correct ordering for $r_{i+1}^{j}$ based on $\tesla(\cdot,i)$. 
Figure~\ref{fig:stack-fig} illustrates ${\tt update}$ to a book-stack data structure step by step. 
By an abuse of notation, in our algorithms we refer to  $\mathcal{S}_i$ with $\mathcal{S}$. 

\begin{figure}[t]
\centering
\begin{tikzpicture}[stack/.style={rectangle split, rectangle split parts=#1,draw, anchor=center}]
\node[stack=7] at (-5,0) (s1) {
\nodepart{one}$r_i^1(I)$
\nodepart{two}$r_i^2(I)$
\nodepart{three}$\vdots$
\nodepart{four}$r_i^{j}(I)$
\nodepart{five}$\vdots$
\nodepart{six}$r_i^{|I|-1}(I)$
\nodepart{seven}$r_i^{|I|}(I)$
};

\node[stack=7] at (0,0) (s2) {
\nodepart{one}$r_{i+1}^2(I)$
\nodepart{two}$r_{i+1}^3(I)$
\nodepart{three}$\vdots$
\nodepart{four}$r_{i+1}^1(I)$
\nodepart{five}$\vdots$
\nodepart{six}$r_{i+1}^{|I|-1}(I)$
\nodepart{seven}$r_{i+1}^{|I|}(I)$
};

\node[stack=7] at (5,0) (s3) {
\nodepart{one}$r_{i+1}^1(I)=r_i^{j}(I)$
\nodepart{two}$r_{i+1}^2(I)=r_i^2(I)$
\nodepart{three}$\vdots$
\nodepart{four}$r_{i+1}^{j}(I)=r_{i}^{j-1}(I)$
\nodepart{five}$\vdots$
\nodepart{six}$r_{i+1}^{|I|-1}(I)=r_{i}^{|I|-1}(I)$
\nodepart{seven}$r_{i+1}^{|I|}(I)=r_{i}^{|I|-1}(I)$
};
\draw[=] (s1) (s2);
\draw[->] (s1) edge (-1,0);
\draw[->] (s2) edge (3,0);
\end{tikzpicture}
\caption{The book-stack, when $T[i+1] = r_i^j(I)$. This change is shown in the first two book-stacks. The third reflects the book-stack at index $i+1$ after it has been updated.} \label{fig:stack-fig}
\end{figure}

Algorithm~$\ref{algsinglewindow}$, {\tt SINGLECOUNTING}  
demonstrates co-occurrence count for a specific window length. 
The co-occurrences of an itemset can be calculated for multiple window lengths by repeating Algorithm~$\ref{algsinglewindow}$ and varying the argument $x$. 

\begin{algorithm}\SetAlgoLined
\DontPrintSemicolon
\LinesNumbered
\KwIn{Sequence $T$ of length $n$, Itemset $I$, Window Length $x$}
\KwResult{$\co(T,I,x)$}
count$\leftarrow$0\;
$\mathcal{S} \leftarrow$empty book-stack\;
\For{each item e $\in$ I}{
    $\mathcal{S}$ += (e, $-\infty$)\;
}
\For{$i  = 0 \text{ to } n-1$}{
    \If{$T[i] \in I$}{
        j $\leftarrow \mathcal{S}.${\tt find}($T[i]$)\;
        $\mathcal{S}$.{\tt update}(j)\;
    }
    \If{$i\geq x-1$ and i - $\mathcal{S}$.{\tt retrieve}(|I|)$<x$}{
        count$\leftarrow$count+1\;
    }
}
\KwRet{count\;}
\caption{SINGLECOUNTING}
\label{algsinglewindow}
\end{algorithm}

\begin{example}

Consider the sequence $T=abcabe$ and the itemset $I=\{a,b\}$. 
The algorithm initializes the $\mathcal{S}$ by adding $(e,-\infty)$ for each item $e$ in $I$, representing that element $e$ has never been seen. Table~\ref{tbl:stack}  shows the state of $\mathcal{S}$ and the resultant max \tesla{} value every time an element of $T$ is processed. At any step the max \tesla{} value can be found by taking the current index in the sequence and subtracting the last access time of the item in the bottom of the book-stack.

\begin{table*}[t!]
\caption{Book Stack changes for single-window co-occurrence counting.}
\begin{center}
\resizebox{0.75\textwidth}{!}{%
\begin{tabular}{c|c|c|c|c|c|c}
    initial&a (i=0) &b (i=1)&c (i=2)&a (i=3)&b (i=4)&e (i=5)\\\hline
    $\max\tesla$&$0-(-\infty)=\infty$&$1-0=1$&$2-0=2$&$3-1=2$&$4-3=1$&$5-3=2$\\ \hline
    \begin{tikzpicture}[stack/.style={rectangle split, rectangle split parts=#1,draw, anchor=center}]
    \node[stack=2]  {
    \nodepart{one}$a(-\infty)$
    \nodepart{two}$b(-\infty)$
};
\end{tikzpicture}
&
    \begin{tikzpicture}[stack/.style={rectangle split, rectangle split parts=#1,draw, anchor=center}]
    \node[stack=2]  {
    \nodepart{one}$a(0)$
    \nodepart{two}$b(-\infty)$
};
\end{tikzpicture}
&
    \begin{tikzpicture}[stack/.style={rectangle split, rectangle split parts=#1,draw, anchor=center}]
    \node[stack=2]  {
    \nodepart{one}$b(1)$
    \nodepart{two}$a(0)$
};  
\end{tikzpicture}
&
    \begin{tikzpicture}[stack/.style={rectangle split, rectangle split parts=#1,draw, anchor=center}]
    \node[stack=2]  {
    \nodepart{one}$b(1)$
    \nodepart{two}$a(0)$
};  
\end{tikzpicture}
&
    \begin{tikzpicture}[stack/.style={rectangle split, rectangle split parts=#1,draw, anchor=center}]
    \node[stack=2]  {
    \nodepart{one}$a(3)$
    \nodepart{two}$b(1)$
};
\end{tikzpicture}
&
    \begin{tikzpicture}[stack/.style={rectangle split, rectangle split parts=#1,draw, anchor=center}]
    \node[stack=2]  {
    \nodepart{one}$b(4)$
    \nodepart{two}$a(3)$
};  
\end{tikzpicture}
&
    \begin{tikzpicture}[stack/.style={rectangle split, rectangle split parts=#1,draw, anchor=center}]
    \node[stack=2]  {
    \nodepart{one}$b(4)$
    \nodepart{two}$a(3)$
};  
\end{tikzpicture}
\label{tbl:stack}
\end{tabular}
}
\end{center}
\end{table*}

\end{example}

\eat{I always found the use of "stack" a bit confusing. Here is a reviewer's comment from ISAAC: Time complexity analysis seems incorrect. The following statement "by keeping a hash table from each item to its node in the stack, items can be moved to the top of the stack in constant time" (Line 182-183) is not precise. The optimality of the time complexity analysis in Section 3.2 is questionable. The term stack is usually used for a first-in-last-out data structure. However, this paper does not use this definition very precisely. The stack is operated by popping out the top item, but the bottom is also referred for calculating the tesla. I do not quite get how this is implemented in O(1).}
\subsection{Complexity Analysis}
\label{sec:swl}
The book-stack can be implemented as a doubly linked list of items.  Finding elements on the bottom of the book-stack can then
be done in constant time.  We can maintain a hash table from each element to the corresponding node in the book-stack. Each node can be accessed in constant time. The book-stack will only take $|I|$ space and no additional space is needed, thus the total space is $O(|I|)$.  In addition, each element of the sequence is accessed once, and only constant time operations are performed, giving a time complexity of $O(n)$.  For co-occurrence of a single window length, this algorithm performs optimally with respect to time complexity. This is because there is an intrinsic linear cost in computing co-occurrence, as each element in the sequence must be examined in the worst case.  In the next section we present a solution that in linear time can calculate the co-occurrence for all window lengths.  

\section{All Window-Length Co-occurrence}
\label{sec:all-w-alg}

\subsection{Counting Co-occurring Windows}
To find the co-occurrence of an itemset $I=\{e_1, e_2, \ldots, e_{|I|}\}$ in sequence $T$ with window length $x$ we must count how many $x$-length windows in $T$ contain $I$.  We will make use of the fact that counting the windows containing $I$ is equivalent to counting the windows that do not contain $I$, since we know the total number of 
$x$-length 
windows is $n-x+1$.  For a sequence $T$ of length $n$ and an itemset $\{e_1\}$ we denote the $x$-length windows that do not contain $e_1$ as $\overline{\{e_1\}}_x$.  
For larger itemsets we extend the notation analogously where $\overline{\{e_1, e_2\}}_x$ are the $x$-length windows that do not contain $e_1$ and do not contain $e_2$.  A window is a {\em non co-occurrent} window as long as there is at least one element in $I$ that is not contained in the window.
Therefore, the co-occurrence of $I$ is the total number of $x$-length windows minus the number of $x$-length windows that do not contain at least one item of $I$.  
\begin{equation}
\co\left(T, I, x \right) = (n-x+1)-|\overline{\{e_1\}}_x\cup \ldots \cup \overline{\{e_{|I|}\}}_x|
\end{equation}
Using the inclusion-exclusion principle we can rewrite the co-occurrence as follows. 
\begin{equation}
    \co\left(T, I, x \right) = (n-x+1)-\sum_{\substack{A\subseteq I:\\  A\neq \emptyset}}(-1)^{|A|+1}\,|\overline{A}_x|\label{incexceq}
\end{equation}
\eat{
\eat{For a set of elements $G=\{e_1,e_2,...\}$ we define a maximal $G$-gap as a sub-string of a sequence that doesn't contain any elements of $G$ but where each of its super-strings contains an element from $G$. }
\begin{example}
\label{ex:incexc}
Consider the example of taking the co-occurrence of $\{a,b\}$ with window length $3$ in the sequence of Figure~\ref{gapfigure}. Note that by Equation~\ref{incexceq}, $\co(T,\{a,b\},3) = (11-3+1)-(\overline{\{a\}}_3+\overline{\{b\}}_3-\overline{\{a,b\}}_3)=9-(1+3-0)=5$.  
\end{example}
}
\begin{figure}[t]
    \centering
    \includegraphics[width=0.33\textwidth]{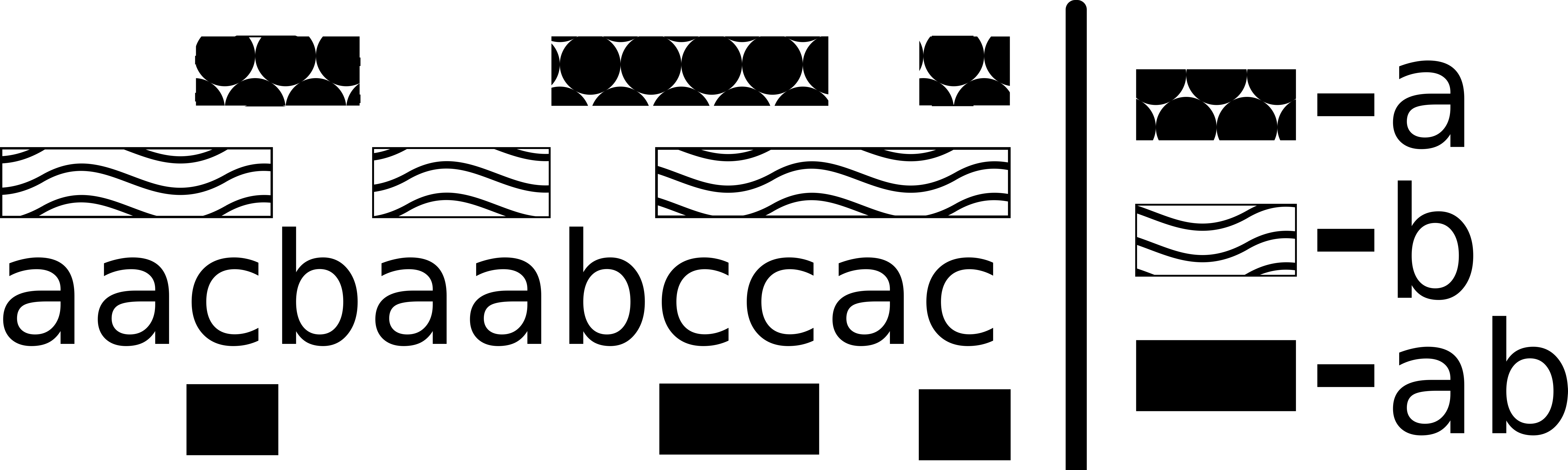}
    \caption{Gaps for certain elements in a sequence.  The uppermost pattern illustrates the three gaps `a'-gaps, the middle pattern shows the `b'-gaps, and the bottom pattern shows the three gaps that contain neither `a' nor `b'.}
    \label{gapfigure}
\end{figure}
\eat{
For an itemset $I=\{e_1,e_2,...\}$, a $\textit{gap}$ in sequence $T$ is a sub-string $T[i\ldots j]$ of the sequence that does not contain any items of $I$. A maximal gap is a sub-string of $T$  that does not contain any items of $I$ but  each of its proper super-strings contains at least one item from $I$.  
Counting the number of co-occurrent windows using Equation~\ref{incexceq} requires verifying for each $x$-length window if it contains a subset of $I$. 
Instead we consider finding the maximal gaps for $I$ and deeming each $x$-length window covered by a maximal gap a non co-occurrent window ($\bar{A_x}$ in Equation~\ref{incexceq}).  In this paper, we refer to  maximal gaps as gaps. 
We will describe shortly how the maximal gaps enable us to efficiently count co-occurrent windows of all lengths. 

\begin{theorem}\label{thm:gapcount}
Let $G_A$ be the set of all maximal gaps in sequence $T$ for $A\subseteq I$ and $n=|T|$.  
For window length $x\leq n$, the number of $x$-length non co-occurrent windows for $A$ is as follows. 
\begin{equation}
|\overline{A}_x| = \sum_{\substack{g\in G_A:\\ |g|\geq x}}\left(|g|-x+1\right)
\end{equation}
\end{theorem}

\begin{proof}
Note first that maximal gaps are disjoint sub-strings, since if two maximal gaps $g', g''$ were not disjoint then a new gap $g'''$ could be formed by including all of the indices of both gaps $g'$ and $g''$.  Then $g'''$ is a super-string of both $g'$ and $g''$ which would contradict them being maximal gaps.  Note that by the definition of a gap, any length $x$ sub-string of a gap in $G_A$ is contained in $\overline{A}_x$. For a maximal gap $g'$ there are $\max(0,|g'|-x+1)$ distinct length $x$ sub strings. Furthermore since the gaps are disjoint, the sum $\sum_{s\in G_A}\max(0, |s|-x+1)$ counts all of the elements in $\overline{A_x}$ that are also contained in a maximal $A$ gap.  Furthermore each element $e$ of $\overline{A}_x$ is contained in some maximal gap since $e$ itself is a gap of length $x$ and either it is maximal or it is contained in some other group that is maximal.  
\end{proof}

It follows from Equation~\ref{incexceq} and Theorem~\ref{thm:gapcount} that the co-occurrence of $I$ can be computed from the maximal gaps for each subset of $I$. 
We can rewrite Equation~\ref{incexceq}  as follows (note that each subset of $I$ in the following equations do not include the empty set). 
\begin{equation}
\co\left(T, I, x \right) = (n-x+1)
- 
\sum_{A\subseteq I}(-1)^{|A|+1}\,\sum_{\substack{g\in G_A:\\ |g|\geq x}}(|g|-x+1)
\label{eq:axgap}
\end{equation}

\begin{example}
Consider the example of finding the co-occurrence of $\{a,b\}$ with window length $3$ in the sequence of Figure~\ref{gapfigure}.  By Equation~\ref{incexceq} we have that the co-occurrence is $(11-3+1)-(\overline{\{a\}}_3+\overline{\{b\}}_3-\overline{\{a,b\}}_3)$.  Note that the maximal $\{a\}$-gaps have length $\{2,3,1\}$, the maximal $\{b\}$-gaps have length $\{3,2,4\}$, and the maximal $\{a,b\}$-gaps have length $\{1,2,1\}$. The co-occurrence is then $(11-3+1)-((3-3+1)+([4-3+1]+[3-3+1])-(0))=5$ which we can see is the correct co-occurrence.
\end{example}

Note that the inner summation is independent of the size of $A$. Thus, it follows  
\begin{multline}
\co\left(T, I, x \right) = (n-x+1)-\\
\sum_{\substack{A\subseteq I:\\|A|\text{ is odd}}}\,\sum_{\substack{g\in G_A:\\ |g|\geq x}} (|g|-x+1) - \sum_{\substack{A\subseteq I:\\|A|\text{ is even}}}\,\sum_{\substack{g\in G_A:\\ |g|\geq x}}(|g|-x+1)
\label{eq:gapasize}
\end{multline}
Let $C_A(k)=|\{g: g\in G_A\wedge |g|=k\}|$, that is the number of gaps of size $k$ for itemset $A$. We can rewrite Equation~\ref{eq:gapasize} based on the size of gaps. 
\begin{multline}
\co\left(T, I, x \right) = (n-x+1)-\\
\sum_{\substack{A\subseteq I:\\|A|\text{ is odd}}}\sum_{k=x}^{n} (k-x+1) C_A(k) - \sum_{\substack{A\subseteq I:\\|A|\text{ is even}}}\sum_{k=x}^{n}(k-x+1) C_A(k)
\end{multline}
It follows that 
\begin{multline}
\co\left(T, I, x \right) = (n-x+1)-\\
\sum_{k=x}^{n}(k-x+1)\,
\left(\sum_{\substack{A\subseteq I:\\|A|\text{ is odd}}}C_A(k) - \sum_{\substack{A\subseteq I:\\|A|\text{ is even}}}C_A(k)\right).
\label{eq:gapsizerearr}
\end{multline}
The inner summation of Equation~\ref{eq:gapsizerearr}, what we will call $H[k]$ in the rest of the paper, indicates the number of maximal gaps of length $k$ of odd subsets of $I$ minus the number of maximal gaps of length $k$ of even subsets of $I$.  
\begin{equation}
\label{eq:hist}
H[k] = \sum_{\substack{A\subseteq I:\\|A|\text{ is odd}}}C_A(k) - \sum_{\substack{A\subseteq I:\\|A|\text{ is even}}}C_A(k)
\end{equation}
}
We know that an $A$-gap of size $k$ contains $k - x + 1$ windows of length $x$ in which none of $A$ occurs. Thus, $|\overline{A}_{x}| = \sum_{k = x}^{n}(k-x+1)N_{A}(k)$, where $N_{A}(k)$ is the number of $A$-gaps of length $k$. Working with the right term of equation (\ref{incexceq}),
\begin{align}
    \sum_{\substack{A\subseteq I:\\  A\neq \emptyset}}(-1)^{|A|+1}\,|\overline{A}_x| 
    &= \sum_{\substack{A\subseteq I:\\  A\neq \emptyset}}(-1)^{|A|+1}\,\sum_{k = x}^{n}(k-x+1)N_{A}(k) \\
    &= \sum_{k = x}^{n}(k-x+1)\sum_{\substack{A\subseteq I:\\  A\neq \emptyset}}(-1)^{|A|+1}N_{A}(k).
\end{align}
Now, let us define:  
\begin{equation}
\label{eq:hist}
H[k] = \sum_{\substack{A\subseteq I:\\  A\neq \emptyset}}(-1)^{|A|+1}N_{A}(k)
\end{equation}
We call the collection of $H[k]$'s for all values of $k$ a {\em gap histogram}, $H$. 
The co-occurrence of $I$ in $x$-length windows of sequence $T$ is then calculated as follows.  
\begin{equation}
\co\left(T, I, x \right) = (n-x+1)-
\sum_{k=x}^{n}(k-x+1)H[k]
\label{eq:histwindow}
\end{equation}
\eat{It further follows from equation \ref{incexceq} and theorem \ref{thm:gapcount} that the co-occurrence of $I$ can be computed by taking the number of length $x$ windows that are a sub-string of a gap of an odd cardinality subset of $I$ and subtracting the number of length $x$ windows that are a sub-string of a gap of an even cardinality subset of $I$.  We will consider a histogram $H$ where $H[i]$ is equal to the number of maximal gaps of length $i$ of odd subsets of $I$ minus the number of maximal gaps of length $i$ of even subsets of $I$.  It clearly follows then that to compute the co-occurrence with a window length $x$ we can compute it with the formula $(n-x+1)-\sum\limits_{i=x}^n{(i-x+1)H[i]}$.  }
An elegant property of this equation is that by 
storing the cumulative counts in a gap histogram we can simultaneously calculate the co-occurrence for all window lengths.  
\eat{
\begin{example}
Figure~\ref{histogram} shows the gap histogram generated from the sequence in Figure~\ref{gapfigure}.  We can see that $H[1] = -1$. This is because there are two length one 
$\{a,b\}$-gaps and one length one  $\{b\}$-gap, thus there is one odd sized length one-gap and two even sized length one-gaps which gives us  $H[1]=1-2=-1$.  For $H[2]$ we have one length two  $\{a\}-,\{b\}-, \text{and } \{a,b\}-$ gaps for a net contribution of $2-1=1$.  For $H[3]$ we have one length three $\{a\}$- gap and one length three $\{b\}$-gap, so $H[3]=2$.  Finally, we have that for $H[4]$ there is a single length 4 $\{b\}$ gap so $H[4]=1$.
\end{example}
\begin{wrapfigure}{i}{0.3\textwidth}
    \centering
    \includegraphics[width=0.3\columnwidth]{figs/hist_values.png}
    \caption{Gap histogram for the sequence of  Figure~\ref{gapfigure} and itemset \{a, b\}.}
    \label{histogram}
\end{wrapfigure}
}
Using gap histograms to store cumulative counts has a space complexity of $\sqrt{n|I|}$. 
In Theorem~\ref{theo:spacecomp}, we will formally discuss the space complexity in more 
details. 
Calculating the co-occurrence from the gap histogram instead of directly counting co-occurrent windows is beneficial since calculating gaps does not require a window length as input and yet the gap information is still sufficient to easily calculate the co-occurrence for all window lengths. 
Thus, all that is needed to calculate all window length co-occurrence is an algorithm to generate the gap histogram. 

The simplest way to generate the gap histogram is to iterate through the sequence, keeping track of where gaps begin and end.  Whenever an item in the given itemset is found at some index $i$ it marks the end of a gap for any subset of $I$ containing that item and also marks the beginning of a new gap spanning from $T[i+1 ... k-1]$, where $k$ is either the index of the next occurrence of an element in the subset of $I$ in the sequence or $n$ if another element does not occur before the end of the sequence. Note that if an element in $I$ occurs in two adjacent indices in the sequence ($i$ and  $i+1$), we obtain the gap $[i+1, i]$ which we treat as a length $0$ gap and discard.  The length $l$ of each newly ended gap can be updated in the histogram by either incrementing or decrementing $H[l]$ depending on whether the subset size was odd or even respectively.  The pseudocode of this approach can be found in the appendix under, Algorithm~\ref{algnaive}, {\tt GAPCOUNTING}.  This algorithm has run time $O(n2^{|I|})$ and performs poorly for large itemsets.  
Algorithm~\ref{algnaive} is inefficient since whenever an item from $I$ is encountered in the sequence, we need to consider $2^{|I|-1}$ subsets of $I$ and update the histogram (subtract or add counts) accordingly (Line~\ref{line:subset}). 
A better algorithm is presented next. 

\newcommand{\pluseq}{\mathrel{{+}{=}}}

\newcommand{\dblop}[2]{%
  \mathrel{#1}\mathrel{#2}%
}

\subsection{Efficient Gap Counting}

Since  updates to the histogram have negating effects on each other (Equation~\ref{eq:hist}), many of the histogram entries do not change when an item of the itemset is observed in the sequence.  
 It turns out when an item of the itemset is observed in a sequence, we only need to update the histogram for the gaps related to the {\em first and second least recently seen items} of $I$. 
To keep track of the $\tesla$'s, as we iterate through the sequence we can maintain a book-stack data structure that contains each item in $I$ along with the time that it last appeared in the sequence, so that the most recently seen item appears at the top of book-stack.

Observe that, when an item $e$ from $I$ is seen in the sequence at index $i$, a maximal gap representing each subset of $I$ containing $e$ is added to the histogram.  Furthermore, for any one of those sets $G$, the length of the added gap is the minimum tesla value attained by an item in $G$ at index $i-1$.  Note that in this context we take $\text{tesla}(e,i)=i$ if the element has not yet been encountered in the sequence.  These gaps account for all of the gaps in the sequence except for gaps that include the final element of the sequence, these gaps are handled specially.

For the following theorem, we first provide some notation.
Let
\[
H_{i}=(H_{i}[1],H_{i}[2],\ldots,H_{i}[n])
\]
be the histogram up to index $i$ in the sequence.
\begin{theorem}\label{thm:hupdates}
    For any $0<i<n$,
    suppose $T[i] = r_i^{|I|}(I)$ then
    \[
        H_{i}[k] = 
        \begin{cases}
            H_{i-1}[k] + 1\ \text{if}\  k =  \tesla( r_i^{|I|}(I) ,i-1) \\
            H_{i-1}[k] -1\ \text{if}\  k = \tesla( r_i^{|I|-1}(I), i-1) \\
            H_{i-1}[k]\  \text{otherwise.}
        \end{cases}
    \]
    If $T[i] \neq r_i^{|I|}(I)$ then $H_{i}[k] = H_{i-1}[k]$ for all $k$.
    In other words, the histogram is only updated when the next element in the sequence is the item that was just at the bottom of the book-stack.
\end{theorem}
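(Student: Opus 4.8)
The plan is to isolate the single‑step change $H_i[k]-H_{i-1}[k]$ produced by reading $T[i]$ and to evaluate it through a binomial cancellation. First I would recall from the discussion preceding the statement that the only maximal gaps whose counts change at step $i$ are those ending at index $i-1$, that for a nonempty $G\subseteq I$ such a gap exists exactly when $T[i]\in G$, and that it then has length $\ell_G:=\min\{\tesla(e',i-1):e'\in G\}$ (with the convention $\tesla(e',i-1)=i$ when $e'$ has not occurred in $T[0\ldots i-1]$; gaps touching index $n-1$ are excluded and handled separately). Since a completed $G$‑gap of length $\ell$ adds $(-1)^{|G|+1}$ to $H[\ell]$, this already yields $H_i=H_{i-1}$ whenever $T[i]\notin I$, which is part of the second case. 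For the rest write $e:=T[i]\in I$, $r^j:=r_i^j(I)$, $t_j:=\tesla(r^j,i-1)$ (so $t_1\le\cdots\le t_{|I|}$), and let $m$ be the unique index with $e=r^m$; then
\[
H_i[k]-H_{i-1}[k]\;=\;\sum_{\substack{G\subseteq I\,:\,e\in G}}(-1)^{|G|+1}\,[\ell_G=k].
\]

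The main step is to regroup this sum by the rank at which the minimum tesla is attained. For $G\ni e$ set $j(G):=\min\{j:r^j\in G\}$, so that $\ell_G=t_{j(G)}$ and $j(G)\le m$. For a fixed $j\le m$, the sets $G$ with $j(G)=j$ are exactly those of the form $G=\{r^j,r^m\}\cup S$ with $S\subseteq F_j:=\{r^{j+1},\ldots,r^{|I|}\}\setminus\{r^m\}$. Writing $b:=|\{r^j,r^m\}|$ (which is $1$ when $j=m$ and $2$ otherwise) and $c:=|F_j|$, the signed contribution of this group is
\[
\sum_{s=0}^{c}\binom{c}{s}(-1)^{b+s+1}\;=\;(-1)^{b+1}(1-1)^{c}\;=\;\begin{cases}(-1)^{b+1}&\text{if }c=0,\\0&\text{if }c\ge 1.\end{cases}
\]
Hence only groups with $c=0$ survive, and a direct count gives $c=|I|-m$ if $j=m$ and $c=|I|-j-1\ge|I|-m$ if $j<m$.

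It then remains to read off the two cases. If $m<|I|$ then $c\ge 1$ for every $j\le m$, so every group vanishes and $H_i=H_{i-1}$; combined with the observation above this is exactly the claim whenever $T[i]\neq r_i^{|I|}(I)$. If $m=|I|$, i.e.\ $T[i]=r_i^{|I|}(I)$, the only surviving groups are $j=|I|$ (where $b=1$, contribution $+1$ at $k=t_{|I|}=\tesla(r_i^{|I|}(I),i-1)$) and $j=|I|-1$ (where $b=2$, contribution $-1$ at $k=t_{|I|-1}=\tesla(r_i^{|I|-1}(I),i-1)$), with every other $k$ unchanged; this is the first case.

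The computation is routine; the step requiring the most care is the bookkeeping for subsets whose gap spans the whole processed prefix, i.e.\ subsets containing an item not yet seen. Under the stated convention all such items carry tesla value $i$ at step $i$, so they fit uniformly into the grouping above, but if the two least‑recently‑seen items of $I$ are both still unseen then $t_{|I|}=t_{|I|-1}=i$ and the $+1$ and $-1$ of the first case collide in the bucket $H_i[i]$ and cancel; the statement should be read with this cancellation understood. I would also briefly check the harmless edge behaviour: $\ell_G=0$ when an item of $G$ sits at position $i-1$, which touches only the out‑of‑range entry $H[0]$, and the degenerate case $|I|=1$, in which the term $r_i^{|I|-1}(I)$ is vacuous.
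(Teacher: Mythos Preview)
Your proposal is correct and follows essentially the same route as the paper's proof: you express the increment $H_i[k]-H_{i-1}[k]$ as a signed sum over subsets $G\ni T[i]$, regroup by the rank $j(G)$ of the element attaining the minimum tesla in $G$, and then use the binomial identity $\sum_s\binom{c}{s}(-1)^s=(1-1)^c$ to kill every group with $c\ge 1$, leaving only the two singleton groups when $m=|I|$. Your notation is in fact a bit cleaner than the paper's (which juggles several uses of $j$), and your explicit discussion of the edge cases---unseen items collapsing $t_{|I|}$ and $t_{|I|-1}$ to the same bucket, the vacuous $r_i^{|I|-1}$ when $|I|=1$, and the harmless $\ell_G=0$ contributions---goes slightly beyond what the paper spells out.
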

\begin{proofsketch}
    We have a maximal gap for every subset $A$ of $I$ containing $T[i]$. The length of this $A$-gap is $\min_{e \in A}\tesla(e,i-1)$, hence the addition to the histogram from $A$ is $(-1)^{|A|+1}$ to the $k$th spot where $k = \min_{e \in A}\tesla(e,i-1)$. Suppose $T[i] \neq \arg\max_{e\in I}\tesla(e,i-1)$ i.e., $T[i]$ is not the item seen furthest in the past most recently. There are the same number of even and odd subsets of $I$ in which $T[i] = \arg\min_{e \in A} \tesla(e, i-1)$ hence these subsets contribute no net updates to $H$. For the remaining subsets, the same argument follows, hence there are no net updates.
    
    Now suppose that $T[i] = \arg\max_{e \in I} \tesla(e, i-1)$. Similar to the above, for each item in $I$ not equal to $r_i^{|I|-1}(I)$ and $T[i]$, there are the same number of even and odd subsets of $I$ in which $T[i] = \arg\min_{e \in A} \tesla(e,i-1)$. But for $r_i^{|I|-1}(I)$ there is but one subset in which this is satisfied, namely, $\{r_i^{|I|-1}(I), T[i]\}$, and there is also one subset in which $T[i]$ satisfies this, $\{T[i]\}$. Therefore we have 
    \begin{align*}
        H_{i}[\tesla(T[i],i-1)] &= H_{i-1}[\tesla(T[i],i-1)] + 1,\\
        H_{i}[\tesla(r_{i}^{|I|-1}(I),i-1)] &= H_{i-1}[\tesla(r_{i}^{|I|}(I),i-1)] - 1.
    \end{align*}
    \eat{
    The update for each $H[k]$ can be written as
    \begin{equation}
        H_{i}[k]= H_{i - 1} \left[ k \right] + \sum_{A \in C_{k}} {(-1)}^{|A|},
        \label{eq:update}
    \end{equation}
    where $C_{k}$ is the set of subsets of $I$ which contains $T[i]$ and has that $\min_{e\in A}\tesla(e,i-1) = k$. Let $j$ be place of $\arg\min_{e \in A} \tesla(e,i-1)$ in $I$ ordered by  $\tesla(e,i-1)$. Suppose $j \neq |I|-1,|I|$ and $T[i]\neq \arg\min_{e\in A}\tesla(e,i-1)$. Then there are $\binom{|I| - j}{\ell + 1}$ subsets of $I$ of size $\ell - 1$ in $C_{k}$. Therefore inserting back into equation (\ref{eq:update}), we have that
    \[
        H_{i}[k]- H_{i-1}[k] = \sum_{\ell = 1}^{|I|-j}\binom{|I| - j}{\ell - 1}(-1)^{|A|} = 0
    \]
    The similar result follows when $\arg\min_{e\in A}\tesla(e,i-1) = T[i]$ when $j \neq |I|- 1,|I|$.
    \eat{
    Each $A \in C_{k}$ contains $r_{i}^{j_i(k)}(I)$ and none of $r_{i}^{1}(I),\ldots ,r_{i}^{j_i(k)-1}(I)$ as $ \tesla(r_{i}^{1}(I),i-1) < \cdots < \tesla(r_{i}^{j_i(k)-1}(I),i-1) < \tesla(r_{i}^{j_i(k)}(I), i-1)$.
    This collection of subsets can be written as
    \[
        C = \{A \subseteq I | A = \{T\left[ i \right]\}\cup B, B\subseteq I\setminus\{T\left[ i \right]\}\}.
    \] 
    For each $A \in C$, the update to $H$ is ${(-1)}^{|A|}$ to $H\left[\min_{e\in A}\tesla(e,i-1)\right]$ as we have found an $A$-gap of size $\min_{e\in A}\tesla(e,i-1)$ at index $i$. Let $C_{k} = \{A \in C | \min_{e \in A }\tesla(e,i-1) = k\}$. The updates can be expressed by 
    \begin{equation}
        H_{i}[k]= H_{i - 1} \left[ k \right] + \sum_{A \in C_{k}} {(-1)}^{|A|}.
        \label{update}
    \end{equation}
    Now we restrict our set of updates indices to $k$ such that there exists $e \in I$ in which $\tesla(e,i-1) = k$. For each remaining $k$, there exists $j_{i}(k)$ in which $\tesla(r_{i}^{j_{i}(k)}(I),i-1) = k$. Each $A \in C_{k}$ contains $r_{i}^{j_i(k)}(I)$ and none of $r_{i}^{1}(I),\ldots ,r_{i}^{j_i(k)-1}(I)$ as $ \tesla(r_{i}^{1}(I),i-1) < \cdots < \tesla(r_{i}^{j_i(k)-1}(I),i-1) < \tesla(r_{i}^{j_i(k)}(I), i-1)$. This means there are $\binom{|I| - j_i(k) - \eta_{i}(k)}{\ell - \eta_{i}(k)}$ subsets of $I$ of size $\ell + 1$ in $C_{k}$ when $j_i(k)\neq |I|-1$, and $j_i(k)\neq|I|$, where $\eta_{i}(k) = 0$ if $\tesla(T[i],i-1) = k$ and $\eta_{i}(k) = 1$ otherwise. This gives
    \[
        |C_{k}| = \sum_{\ell = \eta_{i}(k)}^{|I|-j_i(k)}\binom{|I|-j_i(k) - \eta_{i}(k)}{\ell - \eta_{i}(k)} = \sum_{l = 0}^{|I|-j_i(k)-\eta_{i}(k)}\binom{|I|-j_i(k)-\eta_{i}(k)}{\ell},\quad j_i(k)\neq |I|-1,|I|.
    \]
    Factoring in the updates according to equation (\ref{update}), we have
    \[
        H_{i}[k] - H_{i-1}[k] = \sum_{\ell = 0}^{|I|-j_i(k)-\eta_{i}(k)}\binom{|I|-j_i(k)-\eta_{i}(k)}{\ell}{(-1)}^{l+1} = -{(1-1)}^{|I|-j_i(k)-\eta_{i}(k)} = 0, \quad j_i(k)\neq |I|-1,|I|,
    \]
    by the binomial theorem.
    }
    Now there are three remaining cases to consider for possible configurations of $j = |I|-1$ and $j= |I|$.

    When $T[i] = r_{i}^{|I|-1}(I)$, there are two sets in which $\min_{e \in A}\tesla(e,i-1) = \tesla(r_{i}^{|I|-1}(I),i-1)$, namely, $\{r_{i}^{|I|-1}(I),r_{i}^{|I|}(I)\}$ and $\{r_{i}^{|I|-1}(I)\}$ in which their updates are given according to equation (\ref{eq:update}).

    When $T[i]\neq r_{i}^{|I|-1}(I)$ and $T[i]\neq r_{i}^{|I|}(I)$, there are no such sets in which $\min_{e\in A}\tesla(e,i-1)= \tesla(r_{i}^{|I|-1}(I),i-1)$. 

    But in the last remaining case, when $r_{i}^{|I|}(I) = T[i]$, there is one set in which $\min_{e \in A }\tesla(e,i-1) = \tesla(r_{i}^{|I|}(I),i-1)$, namely, $\{T[i]\}$. There is also one set for which $\min_{e \in A}\tesla(e,i-1) = \tesla(r_{i}^{|I|-1}(I),i-1)$, namely, $\{r_{i}^{|I|-1}(I),T[i]\}$. By equation (\ref{eq:update}), these two correspond to the updates 
    \begin{align*}
        H_{i}[\tesla(r_{i}^{|I|}(I),i-1)] &= H_{i-1}[\tesla(r_{i}^{|I|}(I),i-1)] - 1,\\
        H_{i}[\tesla(r_{i}^{|I|-1}(I),i-1)] &= H_{i-1}[\tesla(r_{i}^{|I|}(I),i-1)] + 1,
    \end{align*}
    proving the result.}
\end{proofsketch}
The theorem does not handle the case for $H_{n}$, which we now address. 
The argument is similar to the proof above for $H_i$ with $i<n$, except that $T[i]$ is undefined.
All gaps necessarily close at the end of the sequence. This means that $|C_{k}| = \sum_{\ell = 0}^{|I|}\binom{|I| - j}{\ell}$, for all but $j = |I|$. For $j = |I|$ there is but one set for which $r_{i}^{|I|}(I) = \arg\min_{e \in A}\tesla(e,n)$, namely, $\{r_{i}^{|I|}(I)\}$. Thus $H_{n}[k] = H_{n-1}[k]$ for all $k$ except when $k = \tesla(n, i-1)$ in which $H_{n}[k] = H_{n-1}[k] - 1$. 
\eat{
, and the set $B$
simply consists of $I- \{ e_{\ell}, \ldots, e_1 \}$.
The set $B$ is empty only
when $\ell = |I|$, that is, we are examining the bottom element of the book-stack. Therefore
there is a non-zero update $\delta(n,k)$ only 
for $k = \mathrm{tesla}(e_{|I|},n)$.
In this case, $A = \{e_{|I|}\}$, and $\delta(n,k) = 1$. 
}

The incremental updates that we have derived above result in algorithm \name{}, shown in Algorithm~\ref{algfinal}. 

\subsection{Complexity Analysis}
\label{sec:complexity}
The next two theorems assume that the histogram can be implemented as a hashtable with perfect hashing.  Without perfect hashing the histogram must contain space for all entries from $1-n$ and thus will be linear in space to maintain a constant run time or constant histogram updates must be sacrificed to obtain a worst case $n^2$ runtime.
\begin{theorem}[Time Complexity]
\label{theo:timecomp} The time complexity of all-window length co-occurrence algorithm is linear in the length of the sequence.  
\end{theorem}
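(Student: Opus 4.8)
The plan is to split the running time of \name{} (Algorithm~\ref{algfinal}) into two phases — (i) the single left-to-right pass that builds the gap histogram $H$, and (ii) the subsequent extraction of $\co(T,I,x)$ for every window length $x\in\{|I|,\ldots,n\}$ from $H$ — and to argue each phase costs $O(n)$. First I would pin down the data structures. The book-stack $\mathcal{S}$ is a doubly linked list of the $|I|$ items with their last-access indices, augmented by a hash table mapping each item of $I$ to its list node and by a pointer to the bottom node; under the perfect-hashing assumption of this section (applied also to the auxiliary item-indexed tables, each of size $|I|$, including the test $T[i]\in I$), $\mathcal{S}.{\tt find}$ is an $O(1)$ lookup, $\mathcal{S}.{\tt update}(j)$ unlinks that node and relinks it at the top in $O(1)$ and, when the unlinked node was the bottom, promotes its predecessor to bottom in $O(1)$, and $\mathcal{S}.{\tt retrieve}(|I|)$ reads the bottom pointer in $O(1)$. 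The histogram $H$ is a perfectly hashed table, so each read or write of $H[k]$ is $O(1)$.

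Next I would bound the pass. The algorithm visits each index $i=0,\ldots,n-1$ exactly once. At index $i$ it performs one membership test $T[i]\in I$ ($O(1)$); if it fails, nothing further happens. If $T[i]\in I$, it calls ${\tt find}$ then ${\tt update}$ on $\mathcal{S}$, both $O(1)$ by the above. The decisive point, which is exactly Theorem~\ref{thm:hupdates}, is that the histogram is modified only when $T[i]=r_i^{|I|}(I)$ — i.e. $T[i]$ is the current bottom of the book-stack — and even then only two entries change, an increment of $H[\tesla(r_i^{|I|}(I),i-1)]$ and a decrement of $H[\tesla(r_i^{|I|-1}(I),i-1)]$; so the histogram side of each iteration is $O(1)$ rather than the $2^{|I|}$ of the naive {\tt GAPCOUNTING}. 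Hence every iteration is $O(1)$, and the single closing decrement that produces $H_n$ (as discussed just after Theorem~\ref{thm:hupdates}) is $O(1)$, so building $H$ takes $O(n)$.

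Finally I would show that turning $H$ into all the co-occurrence counts costs $O(n)$, not the $O(n^2)$ of evaluating Equation~\ref{eq:histwindow} separately for each $x$. Writing $(k-x+1)=(k+1)-x$ splits the sum in Equation~\ref{eq:histwindow}:
\[
\co(T,I,x) = (n-x+1) - \sum_{k=x}^{n}(k+1)H[k] + x\sum_{k=x}^{n}H[k].
\]
I would precompute the suffix sums $P[x]=\sum_{k=x}^{n}(k+1)H[k]$ and $Q[x]=\sum_{k=x}^{n}H[k]$ in one sweep of $x$ from $n$ down to $1$ via $P[x]=P[x+1]+(x+1)H[x]$ and $Q[x]=Q[x+1]+H[x]$, each step being an $O(1)$ read of $H[x]$ (interpreted as $0$ when absent); this is $O(n)$ in total. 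Then $\co(T,I,x)=(n-x+1)-P[x]+xQ[x]$ is $O(1)$ per $x$, so emitting all $n-|I|+1$ answers is $O(n)$, and the two phases together give $O(n)$.

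I expect the main obstacle to be precisely this extraction step: linearity of the pass is essentially handed to us by Theorem~\ref{thm:hupdates} plus constant-time book-stack and histogram operations, whereas the obvious reading of Equation~\ref{eq:histwindow} is quadratic, and the suffix-sum reformulation above is what rescues it. A secondary point that needs to be stated carefully is that the $O(1)$-per-iteration claim genuinely requires maintaining the bottom-of-stack pointer (so that ${\tt retrieve}(|I|)$ and the post-${\tt update}$ bottom promotion are constant time) and the perfect-hashing hypothesis for $H$ and for the small item-indexed tables.
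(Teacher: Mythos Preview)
Your proposal is correct and follows the same two-phase decomposition as the paper's proof: an $O(n)$ pass to build $H$ using Theorem~\ref{thm:hupdates} and constant-time book-stack/histogram operations, followed by an $O(n)$ extraction of all $\co(T,I,x)$ from $H$.

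Where you go further than the paper is in the extraction phase. The paper's proof simply asserts that ``the co-occurrence for every window length is computed in linear time by summing the histogram as shown in Equation~\ref{eq:histwindow}'', and in fact the pseudocode of Algorithm~\ref{algfinal} writes this as a doubly nested loop that is $\Theta(n^2)$ as stated. Your suffix-sum reformulation $P[x]=\sum_{k\ge x}(k+1)H[k]$ and $Q[x]=\sum_{k\ge x}H[k]$, computed once in a single downward sweep and then combined as $(n-x+1)-P[x]+xQ[x]$, is exactly the missing ingredient that makes the linearity claim rigorous. So your argument is not a different route but a strictly more complete version of the paper's; the paper takes the $O(n)$ extraction for granted, while you actually justify it.
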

\begin{proof}
The algorithm iterates over the sequence once and possibly updates the book-stack and the histogram for each element in the sequence.    
Since updating the book-stack and updating the histogram are both done in constant time, the generation of the histogram is done in linear time in the length of the sequence. 
Once a histogram is computed, the co-occurrence for every window length is computed in a linear time by summing the histogram as shown in Equation~\ref{eq:histwindow}.  Thus, the algorithm provides an $O(n)$ method to calculate all window length co-occurrence.
\end{proof}

\begin{theorem}[Space Complexity]The space complexity of the algorithm is $O(\sqrt{n|I|})$ where $n$ is the length of the sequence and $|I|$ is the size of the itemset.
\label{theo:spacecomp}
\end{theorem}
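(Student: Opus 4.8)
The plan is to bound separately the space of the two structures the algorithm maintains: the book-stack and the gap histogram $H$. The book-stack is a doubly-linked list of the items of $I$ together with a hash table of $|I|$ node pointers, costing $O(|I|)$; since every item of $I$ that actually appears in $T$ occupies at least one of the $n$ positions, we have $|I|\le n$ (items of $I$ absent from $T$ make the co-occurrence $0$ for every window and may be discarded up front), so $O(|I|)=O(\sqrt{n|I|})$. Everything therefore reduces to showing that, when $H$ is kept as a perfectly hashed table storing only the entries ever written, the number of distinct positions written is $O(\sqrt{n|I|})$.

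First I would invoke Theorem~\ref{thm:hupdates} to pin those positions down exactly. At an index $i$ with $T[i]=r_i^{|I|}(I)$ the update writes only to $\tesla(r_i^{|I|}(I),i-1)$ and to $\tesla(r_i^{|I|-1}(I),i-1)$; at every other index it writes nothing; and exactly one further position is written when the sequence ends. Let $V_1$ be the set of distinct values of the first kind and $V_2$ the set of distinct values of the second. Then the number of positions $H$ ever touches is at most $|V_1|+|V_2|+O(1)$, so it suffices to prove $|V_1|,|V_2|=O(\sqrt{n|I|})$.

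For $V_1$, each $k\in V_1$ equals $\tesla(e,i-1)$ for an index $i$ with $T[i]=e$ at the bottom of the book-stack, i.e. $k$ is the length of the block of $T$ lying strictly between the previous occurrence of $e$ and index $i$; since this block is immediately followed by the occurrence of $e$ at index $i$, for a fixed item $e$ the blocks recorded at two such indices are separated by an occurrence of $e$ and hence disjoint. Their total length is at most $n$ for each fixed $e$, hence at most $n|I|$ over all of $I$. Choosing for each distinct $v\in V_1$ one recorded block of length exactly $v$ gives a set of distinct blocks, so $\sum_{v\in V_1} v\le n|I|$; since $V_1$ is a set of $|V_1|$ distinct positive integers, $\sum_{v\in V_1} v\ge 1+2+\cdots+|V_1|=|V_1|(|V_1|+1)/2$, whence $|V_1|\le\sqrt{2n|I|}$. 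The argument for $V_2$ is the same once the correct disjointness is in hand: $k\in V_2$ is the length of the block from the last occurrence of the second-to-bottom item $f$ up to index $i-1$, and the point is that immediately after index $i$ the item $f$ drops to the very bottom of the book-stack and leaves it only by being read, so between any two indices at which $f$ is second-to-bottom the item $f$ is read at least once; thus for fixed $f$ these blocks are pairwise disjoint, their total length is at most $n|I|$ over all of $I$, and the identical counting gives $|V_2|\le\sqrt{2n|I|}$. Combining, the histogram touches $O(\sqrt{n|I|})$ positions and the total space is $O(\sqrt{n|I|})$.

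I expect the two disjointness claims to be the only real obstacle: one must verify that the block recorded for a given item is genuinely followed (for $V_1$) or, via the book-stack dynamics, preceded by an occurrence of that item (for $V_2$), so that per item the recorded blocks fit disjointly into a line of length $n$. Given that, the remainder — the reduction through Theorem~\ref{thm:hupdates}, the elementary identity $1+2+\cdots+m=m(m+1)/2$, and folding the $O(|I|)$ book-stack term into $O(\sqrt{n|I|})$ — is routine.
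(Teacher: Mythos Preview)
Your proof is correct and takes essentially the same approach as the paper: bound the total length of the gap values written to $H$ by $O(n|I|)$ and then invoke the triangular-number inequality $1+2+\cdots+k=k(k+1)/2$ to cap the number of distinct entries. The only noteworthy difference is your handling of $V_2$: the paper dispatches it with the one-line observation that at each update the second-to-bottom tesla is at most the bottom tesla (so the $V_2$ sum is dominated by the $V_1$ sum), whereas your book-stack-dynamics disjointness argument, while correct, is more work than necessary.
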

\begin{proof}
    Space is used to maintain the book-stack and the histogram.  The book-stack will use $O(|I|)$ space.  Note that for any item $e$ in the itemset the total length of gaps for $\{e\}$ is at most the length of the sequence.  Thus, we have that the sum of all of the lengths of single-item gaps is bounded above by $n|I|$.  Furthermore, whenever an item of the itemset is on the bottom of the book-stack a maximum of two new  gaps are added to the histogram.  The length of the gap associated to the bottom item in the book-stack is equal to the length of a single-item gap. The length of the other gap is bounded above by the length of the first gap. 
    Therefore, the sum of the length of all gaps added to the histogram is bounded above by $2n|I|$.  Note the size of the histogram is the number of distinct gap lengths added to it. In the worst case, gaps are greedily added to the histogram such that there is a length $1,2,\ldots,k$ size gap added.  In this case, if the total number of gaps added is $k$ the total length of the gaps is $\frac{k(k+1)}{2}$. We know that the sum of the gaps length in a histogram is bounded by $2n|I|$. Thus, we have that $\frac{k(k+1)}{2}\leq 2n|I|$. Solving  for $k$, we have that $k^2+k\leq 4n|I|$ and $k\leq 2\sqrt{n|I|}$.  Thus, the total space used is bounded above by $|I|+2\sqrt{n|I|}$ which gives a space complexity of $O(\sqrt{n|I|})$.  
\end{proof}

\begin{algorithm}\SetAlgoLined
\DontPrintSemicolon
\LinesNumbered
\KwIn{Sequence $T$, ItemSet $I$}
\KwResult{Co-occurrence of all window lengths}
    $H \leftarrow$ empty histogram\;
    $cooc \leftarrow$ []\;
    $\mathcal{S}\leftarrow$ empty book-stack\;
    \For{$e \in I$}{
        $\mathcal{S}$ += (e, $-\infty$)\;
    }
    
    \tcp*[h]{Read through entire sequence}\\
    \For{$i=0$ to $n-1$}{
        current$\leftarrow T[i]$\;
        \tcp*[h]{When element is seen, update bottom two gaps}\;
        \If{$\text{current} \in I$}{
            \If{$\mathcal{S}.{\tt find}(\text{current}) = |I|$}{
                $f \gets i - \mathcal{S}.{\tt retrieve}(|I|)$\;
                $s \gets i - \mathcal{S}.{\tt retrieve}(|I|-1)$\;
                $H[f] \gets H[f] + 1$\;
                $H[s] \gets H[s] - 1$\;
            }
            j $\leftarrow \mathcal{S}.{\tt find}(\text{current})$\;
        $\mathcal{S}.{\tt update}(j)$\;
        }
        }
        \tcp*[h]{Final gap from bottom of book-stack}\\
        $f \gets i - \mathcal{S}.{\tt retrieve}(|I|)$\;
        $H[f] \gets H[f]+1$\;
        \For{$x = |I|$ to $|T|-|I|+1$}{
            $S_x \gets 0$\;
            \For{$k = x$ to $|T|$}{
                $S_x \leftarrow S_x + (k-x+1)H[k]$\;
            }
            $cooc[x] \leftarrow (|T| - x + 1) - S_x$\;
        }
    \Return{$cooc$}\;
    \caption{\name{}}
    \label{algfinal}
    \end{algorithm}

\section{Pattern Co-occurrence}
We now wish to generalize our algorithm in two ways. The first is to patterns and the second is to a stream in which multiple events can occur at the same index. Pattern co-occurrence is explained first. A pattern is a string with characters drawn from our alphabet $\mathbb{A}$. A pattern $e$'s $i $th component is denoted $e[j]$ and the length of the pattern is $|e|$.  A pattern occurs in a sequence $T$ if there exists $j \in \left[ |T| \right]$ such that $T[j...j+|e|-1] = e$, also let all such $j$ be denoted in the set $b(e)$. Thus, pattern co-occurrence for an itemset $I$ is defined as the number of windows in which every pattern in $I$ occurs. 
We wish to find an algorithm that can compute the co-occurrence for all window lengths in one pass for patterns. It is clear that $\tesla$ is no longer well-defined. Let $e$ be a pattern. So define $\btesla(e,i) = i - \max (|b(e)\cap\{0,\ldots ,i\}|)$, which is the distance between $i$ and the most recent start of the pattern. If $b(e) \cap \{0,\ldots,i\}$ is empty, then let it be $i$.

We can use our previous definition of an $A$-gap for $A\subseteq I$, but the size of an $A$ gap is now found differently. Previously, the size of an $A$-gap closed at time $i$ would be $\min_{e \in A}\tesla(e, i-1)$, but now it is $\min_{e\in A}\btesla(e,i-1)$, since an $A$-gap still occurs if all but the tail ends of members of $A$ are within said gap. Supposing that no two patterns in consideration end at the same time, it is easy to see that Theorem \ref{thm:hupdates} still holds in this case, using $\btesla$ in place of $\tesla$. Thus finding an algorithm that handles multiple events at the same index would solve all window length pattern co-occurrence as well.

\subsection{Multiple Item Co-occurrence}
It is now natural to define co-occurrence for sets of items. We let $T[i]\subseteq \mathbb{A}$, rather than just one element of $\mathbb{A}$, for all $i$. A co-occurring window for some itemset $I\subseteq \mathbb{A}$ is a window in which for all $e \in I$, there exists a set $A \in w(x,i)$ such that $e \in A$. Thus the co-occurrence is the sum of these co-occurring windows. This is the natural extension. We will now present the following theorem relating to the updates of $H$. Let $X_{i}$ denote the set of items that occur at $T[i]$.

\begin{figure}
\centering
\begin{tikzpicture}[>=stealth,thick,baseline]
    \matrix [matrix of math nodes, nodes={align=center,
    text width=1.5cm}] {
    T[i-k_1] & & T[i-k_2] &  &  & T[i] \\\hline
    x_1      & x_2 & x_3      &  y_3 & x_4      & x_1 \\
             &     & y_1      &      &  & x_2 \\
             &     & y_2      &      &  & x_3 \\
             &     &          &      &  & x_4 \\
             &     &          &      &  & x_5 \\
             };
    \end{tikzpicture}
\caption{Illustration of Theorem~\ref{thm:mult}.  
The set of patterns $I$ consists of $x_1,x_2,\ldots$, which 
were seen at $T[i]$, and all other patterns $y_1,y_2,\ldots$.
Here $A$ is the set $\{x_1,x_2\}$ of patterns seen at $T[i]$ that 
were last seen further in the past than any of the other patterns $y_1,y_2,\ldots$.
We add one to $H[k_1]$, where $k_1$ is the time elapsed since $x_1$ was last
seen.  We subtract one from $H[k_2]$, where $k_2$ is the time elapsed since $y_1$ was last seen.}
             \label{fig:pawlco}
\end{figure}

\begin{theorem}\label{thm:mult}
    Suppose that for all $a \in I\cap X_{i}$, $\tesla(a,i-1) < \tesla(e,i-1)$ for any $e \in I\setminus X_i$. Then for any $1 < i < n$,
    \[
        H_{i}\left[ k \right] =
        \begin{cases}
            H_{i-1}\left[ k \right] + 1\ \text{for}\ k = \tesla(r_{i}^{|I\setminus X_i|}(I\setminus X_i),i-1)\\
            H_{i-1}\left[ k \right] - 1\ \text{for}\ k = \tesla(r_{i}^{|I|}(I),i-1)\\
            H_{i-1}\left[ k \right]\ \text{otherwise.}
        \end{cases}
    \] 
    Otherwise, $H_{i}\left[ k \right]= H_{i-1}\left[ k \right]$ for all $k$.
\end{theorem}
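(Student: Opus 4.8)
The plan is to follow the proof of Theorem~\ref{thm:hupdates} essentially line by line, the only new ingredient being that the single observed symbol $T[i]$ is replaced by the observed set $X_i$, so that a maximal gap may now be closed by any one of several symbols at once.

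First I would pin down which maximal gaps are completed when the event at index $i$ is read. Exactly as before, for a nonempty $A\subseteq I$ an $A$-gap is a maximal block of indices at which no symbol of $A$ occurs; such a gap terminates at $i$ precisely when $A\cap X_i\neq\emptyset$, and its length is $\min_{e\in A}\tesla(e,i-1)$ (with the same convention for not-yet-seen symbols as in Theorem~\ref{thm:hupdates}, and length-$0$ gaps discarded). Recording each such gap with weight $(-1)^{|A|+1}$ at the slot equal to its length --- cf.\ the definition of $H$ in Equation~\eqref{eq:hist} --- gives, for every $k$,
\[
H_i[k]-H_{i-1}[k]=\sum_{\substack{\emptyset\neq A\subseteq I,\ A\cap X_i\neq\emptyset\\ \min_{e\in A}\tesla(e,i-1)=k}}(-1)^{|A|+1},
\]
which is the multi-item analogue of the update identity used in the proof of Theorem~\ref{thm:hupdates}, the sole difference being the side-condition $A\cap X_i\neq\emptyset$.

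Next I would evaluate this signed sum by a double inclusion--exclusion. List $I=\{e_1,\dots,e_{|I|}\}$ with $\tesla(e_1,i-1)\le\dots\le\tesla(e_{|I|},i-1)$, breaking ties by the weak order the maps $r_i^{j}$ encode, and fix $k=\tesla(e_j,i-1)$. The subsets contributing to $H[k]$ are exactly those of the form $A=\{e_j\}\cup B$ with $B\subseteq\{e_{j+1},\dots,e_{|I|}\}$ and $(\{e_j\}\cup B)\cap X_i\neq\emptyset$. If $e_j\in X_i$ the last condition is automatic and the contribution is $\sum_{B}(-1)^{|B|}=(1-1)^{|I|-j}$; if $e_j\notin X_i$ the condition reads $B\cap X_i\neq\emptyset$ and, subtracting off the $B\subseteq\{e_{j+1},\dots,e_{|I|}\}\setminus X_i$ terms, it is $(1-1)^{|I|-j}-(1-1)^{|\{e_{j+1},\dots,e_{|I|}\}\setminus X_i|}$. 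Reading $(1-1)^0=1$, each of these vanishes unless one of the displayed exponents is $0$, which occurs only for $j=|I|$ with $e_{|I|}\in X_i$ --- so $e_{|I|}=r_i^{|I|}(I)$, the bottom of the book-stack --- or for a $j<|I|$ with $e_j\notin X_i$ and $\{e_{j+1},\dots,e_{|I|}\}\subseteq X_i$ --- so $e_j=r_i^{|I\setminus X_i|}(I\setminus X_i)$, the bottom of the book-stack restricted to $I\setminus X_i$. Under the order hypothesis in the statement these are precisely the two positions named there (the slots $H[k_1],H[k_2]$ of Figure~\ref{fig:pawlco}), each receiving a net unit change; and when the hypothesis fails, in particular whenever $r_i^{|I|}(I)\notin X_i$, the symbol $e_{|I|}$ lies in no $X_i$ and hence in every tail $\{e_{j+1},\dots,e_{|I|}\}$ with $j<|I|$, so one checks that all groups cancel and $H_i=H_{i-1}$.

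Finally I would dispatch the terminal index $i=n$ exactly as in the paragraph following Theorem~\ref{thm:hupdates}: there every open gap closes, the side-condition disappears, and the signed sum for each $j$ is a full alternating sum over the tail $\{e_{j+1},\dots,e_{|I|}\}$, which cancels except at $j=|I|$, leaving one unit change at $H_n[\tesla(r_i^{|I|}(I),n-1)]$. I expect the main obstacle to be the bookkeeping around the side-condition $A\cap X_i\neq\emptyset$ in the presence of ties among the values $\tesla(\cdot,i-1)$: unlike the single-symbol case, a gap can be closed either by the shallowest element of $A$ in the book-stack or by a deeper one, and it is this fork that forces the second inclusion--exclusion step and produces the second surviving histogram entry; one must then verify carefully that these two surviving updates collapse onto the single pair of updates of Theorem~\ref{thm:hupdates} when $|I\cap X_i|=1$. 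The binomial cancellations themselves are routine.
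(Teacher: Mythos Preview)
Your approach is correct and genuinely different from the paper's. The paper applies inclusion--exclusion at the level of the observed set $X_i$: it writes the family of gap-closing subsets as $\bigcup_{j}K_j$ with $K_j=\{A\subseteq I: x_j\in A\}$, expands $\langle\bigcup_j K_j\rangle$ by inclusion--exclusion over $J\subseteq[|X_i|]$, reduces each intersection $\mathcal K_J=\bigcap_{j\in J}K_j$ to a single-symbol instance of Theorem~\ref{thm:hupdates} (by collapsing the forced elements $\mathcal X_J\subseteq X_i$ down to their book-stack minimum), shows that only the ``suffix'' index sets $J_m=\{|X_i|-m,\dots,|X_i|\}$ give a nonzero $\langle\mathcal K_{J_m}\rangle$, and then telescopes the resulting chain of $\pm1$ contributions. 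You bypass this machinery entirely: parametrizing each contributing $A$ by its book-stack minimum $e_j$ turns the side condition $A\cap X_i\neq\emptyset$ into a constraint on the free tail $B\subseteq\{e_{j+1},\dots,e_{|I|}\}$, and one complementary binomial identity, $\sum_{B\cap X_i\neq\emptyset}(-1)^{|B|}=(1-1)^{|I|-j}-(1-1)^{|\{e_{j+1},\dots,e_{|I|}\}\setminus X_i|}$, isolates the two surviving slots directly. Your route is shorter and self-contained --- Theorem~\ref{thm:hupdates} is re-derived rather than invoked as a black box --- and it makes transparent that the dichotomy between ``two updates'' and ``no updates'' is governed solely by whether $r_i^{|I|}(I)\in X_i$; the paper's route, on the other hand, gives a structural picture of the multi-symbol update as a telescoping superposition of single-symbol ones.
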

\begin{remark}
    This is a generalization of Theorem \ref{thm:hupdates}. Observe that in the case when $|A| = 1$, this reduces to that result. Moreover, when $i= n$ the same update follows.
\end{remark}
\begin{proofsketch} The incremental updates to $H$ correspond to all the subsets of $I$ that contain at least one member of $X_{i}$. Weakly order $X_{i}$ according to $\tesla$. Now let $A_{j}$ where $j\in [X_{i}]$, be the set of subsets of $I$ that contains $x_{j}$. Let $\left< A \right>_{i}$ be the updates corresponding to some set $A$. Therefore
    \[
        H_{i} - H_{i-1} = \sum_{J\subseteq \left[ |K| \right]}(-1)^{|J|+1}\left< \bigcap_{j \in J} A_{j} \right>.
    \] 
    Consider each $\mathcal{K}_{J} = \bigcap_{i \in J}A_{i}$. For each one, the update is the same if one removes all members of $X_{i}$ besides the one corresponding to the smallest number in $J$, call this set $\mathcal{K}'_{J}$. Using Theorem \ref{thm:hupdates}, the update is $+1$ for $k = \tesla(x,i-1)$, $x$ being the item described before, and also is $-1$ for $k = \tesla(x,i-1)$ for $x$ being the furthest item seen in the past not in $\mathcal{K}_{J}\cap I$. But this implies that $J$ that are not of the form $J_{m} = \{|X_{i}| - m ,\ldots , |X_{i}|\}$. For any $0 \leq m < |X_{i}|$, the positive update corresponding to $\mathcal{K}_{J_{m}}$ cancels with the negative update corresponding to $\mathcal{K}_{J_{m+1}}$. This process telescopes leaving only the postive update corresponding to $\mathcal{K}_{J_{0}}$ and the negative update corresponding to $\mathcal{K}_{J_{|X_{i}|}}$. This gives the desired result.
\eat{
    Suppose without loss of generality that $X_{i} \subseteq I$. We wish to find $H_{i}-H_{i-1}$.
    Denote
    \[
        X_{i} = \{r^1_{i}(X_{i}),r^2_{i}(X_{i}),\ldots ,r^{|X_{i}|}_{i}(X_{i})\} = \{x_1,x_2,\ldots ,x_{|X_{i}|}\},
    \]
    as $r$ defined before. Now let $K_{j} = \{A \subseteq I : A = B\cup \{x_{j}\}, B \subseteq I\setminus \{x_{j}\}\}$. The sets updated at $i$ are $\bigcup_{j = 1}^{|X_{i}|}K_{j}$
The update rule is known for each $K_{j}$ based on our previous result. We can leverage the update rule currently known to compute the total update. 
Define $\left< \cdot \right>_{i}$ to be a mapping from subsets of $I$ to an integer valued $n$ dimensional vector. $\left< A \right>_{i}^{k}$ is the sum of the number of maximal gaps of length $k$ ending at index $i$ given by even subsets of $A$, minus the sum of the number of maximal gaps of length $k$ ending at index $i$ given by the odd subsets of $A$. By the inclusion exclusion principle 
\begin{equation}
    H_{i} - H_{i-1} = \left< \bigcup_{j= 1}^{|X_{i}|}K_{j}\right> = \sum_{J\subseteq \left[|X_{i}|\right]}{(-1)}^{|J| +1}\left< \bigcap_{j \in J}K_{j}\right> = \sum_{J\subseteq \left[|X_{i}|\right]}{(-1)}^{|J| +1}\left< \mathcal{K}_{J}\right>.
    \label{updateprinciple}
\end{equation}
Let $\mathcal{X}_{J} = \bigcup_{j \in J}\{x_{j}\}$. We have $\mathcal{K}_{J}= \{A\subseteq I : A = \mathcal{X}_{J}\cup B, B\subseteq I\setminus \mathcal{X}_{J}\}$.
For each set $A \in \mathcal{K}_{J}$, we have that $\left< A \right> = {(-1)}^{|J|+1}\left< A' \right>$, where $A' = \{r_{i}^{1}(\mathcal{X}_{J})\}\cup B$ when $A = \mathcal{X}_{J}\cup B$ for $B \in I \setminus \mathcal{X}_{J}$. This follows, because items that lie in every $A \in \mathcal{K}_{J}$ that never satisfy $\arg\min_{e \in A}\tesla(e,i-1)$ for all $A$ never contribute towards any updates and hence can be ignored. Let $\mathcal{K}_{J}' = \{A' | A \in \mathcal{K}_{J}\}$. From here, we can apply Theorem\ref{thm:hupdates} taking $I$ in that theorem to be $I\setminus \mathcal{X}_{J}$, which gives
\begin{equation}
    \left<\mathcal{K}_{J}\right>_{i}^{k} = {(-1)}^{|J|+1}\left<\mathcal{K}'_{J}\right>_{i}^{k} = {(-1)}^{|J|+1}
    \begin{cases}
        1 \ \text{for}\ k = \tesla(r_{i}^{|I\setminus\mathcal{X}_{J}|}((I\setminus \mathcal{X}_{J})\cup\{x^{*}\}),i-1)\\
        -1\ \text{for}\ k = \tesla(r_{i}^{1}(\mathcal{X}_{J}),i-1)\\
        0\ \text{otherwise},
    \end{cases}
    \label{j_update}
\end{equation}
when $r_{i}^{1}(\mathcal{X}_{J})$ is the element seen furthest in the past at $i$ out of $I\setminus \mathcal{X}_{J}$ and itself. Every update is $0$ otherwise. 

Now assume for all $x\in X_{i}$ and $e \in I\setminus X_{i}$, $\tesla(x,i-1) \geq \tesla(e,i-1)$. For no updates occur otherwise. We can employ equation (\ref{j_update}) for each $\mathcal{K}_{J}$. The only $J$ in which $\left< \mathcal{K}_{J}\right> \neq \mathbf{0}$ are of the following form:
\begin{equation}
    J_{m} = \{|X_{i}| - b: b \in \left[ m \right]\},\quad 0 \leq  m < |X_{i}|.
\end{equation}
From this, the right hand side of equation (\ref{updateprinciple}) becomes
\begin{equation}
    \sum_{J \subseteq \left[ |X_{i}| \right]}{(-1)}^{|J| + 1}\left< \mathcal{K}_{J}\right> = \sum_{m = 0}^{|X_{i}|-1}{(-1)}^{m}\left<\mathcal{K}_{J_{m}}\right>.
    \label{linear}
\end{equation}
Now for $m < |X_{i}| - 1$, we have
\begin{equation}
    \left<\mathcal{K}_{J_{m}}\right>_{i}^{k} = {(-1)}^{m}
    \begin{cases}
        1\ \text{for}\ k = \tesla (r_{i}^{|X_{i}|-(m+1)}(X_{i}),i-1) \\
        -1\ \text{for}\ k = \tesla(r_{i}^{|X_{i}| -m} (X_{i}),i-1)\\
        0\ \text{otherwise},
    \end{cases}
\end{equation}
\begin{equation}
    \left<\mathcal{K}_{J_{|X_{i}|-1}}\right>_{i}^{k} = (-1)^{m}
    \begin{cases}
        1\ \text{for}\ k = \tesla (r_{i}^{|I \setminus X_{i}|}(I \setminus X_{i}),i-1) \\
        -1\ \text{for}\ k = \tesla(r_{i}^{1} (X_{i}),i-1)\\
        0\ \text{otherwise}.
    \end{cases}
\end{equation}
Summing over all values of $m$, we have a telescoping sum across vector components, and placing on the right hand side of equation (\ref{updateprinciple}) obtain
\begin{equation}
    H_{i}\left[ k \right] - H_{i-1}\left[ k \right] = 
    \begin{cases}
        1\ \text{for}\ k = \tesla(r_{i}^{|I \setminus X_{i}|}(I \setminus X_{i}), i-1) \\
        -1\ \text{for}\ k = \tesla(r_{i}^{|I|}(I), i-1) \\
        0\ \text{otherwise},
    \end{cases}
\end{equation}
proving the result ($X_{i} = A$ in the statement of the theorem).
}
\end{proofsketch}

A full proof is given in the appendix. Figure~\ref{fig:pawlco} provides an illustration of Theorem~\ref{thm:mult}. With this result we can now construct a similar algorithm to those before, with a few modifications. Maintain a book-stack as before, but notice that it is no longer a strict ordering. For example, if $X_i = \{e_1,e_2\}\subseteq I$, then one of $e_1$ and $e_2$ will occupy the top of the book-stack and the other will occupy the the second to top spot. To check whether $\max_{e \in X_i \cap I}\tesla(e,i) < \min_{e \in I \setminus X_i}\tesla(e,i)$, we partition the book-stack using $p \in \{0,...,|I|\}$, where $p$ is defined as follows: for all $j \leq p$, $\tesla(r_i^j(I),i-1) = \tesla(r_i^{|I|}(I),i-1)$, and for all $j > p$, $\tesla(r_i^j(I),i-1) > \tesla(r_i^{|I|}(I),i-1)$. Thus checking if the non-trivial conditions given in Theorem~\ref{thm:mult} hold is easy as we just check that $r_i^j(I) \leq p$ for every $j$ corresponding to a member in $X_i$. It is also easy to update the histogram if these conditions hold, as we just update according to $r_j^p(I)$ and $r_j^{p+1}(I)$. The pseudocode is given in algorithm~\ref{algmult} in the appendix.
\subsection{Complexity Analysis}
Maintaining the partition is at the worst case a linear scan of $I$ at each index. A state machine is spawned at each element that starts one of the patterns, and is terminated either by the pattern not being completed, or by completion of the pattern. If completed, that pattern is moved from its current level in the structure to the level corresponding to the now found value of $\btesla$. This requires $O(|I|)$ operations. Additionally, there can be at most $2|I|$ state machines created or terminated at each step. Thus, the time complexity is $O(n|I|)$. Space complexity is also the same but including the space for the state machines, giving $O(\sqrt{n|I|} + e_{\max}|I|)$, where $e_{\max}$ is the length of the largest pattern.

\section*{Continuous co-occurrence}
The previous section opens up new opportunities. Because we can now allow multiple items to occur at the same index, we can analyze occurrences of discrete events, which are items in the orignal sense, in an interval of time, which is the sequence in the original sense. Given a set of events and times, we would like to approximate the probability of two events occuring within some interval of time with generalized AWLCO\@.

To be more precise, let $\tau \geq 0$ and define 
\[
    T = \{(e_1,\omega_{1,n_1}) ,\ldots ,(e_1,\omega_{1,n_{k_1}}), (e_{2},\omega_{2, 1}) ,\ldots , (e_{|I|}, \omega_{|I|,n_{|I|}})\}
\]
in which $e_{i} \in I$ and $\omega_{i,j}\in \left[ 0,\tau \right]$ for all $j \in \{1 ,\ldots , n_{i}\}$, for all $i \in \{1 ,\ldots ,|I|\}$. We call this $T$ the set of time stamps of our set of events $I$. Let, for the sake of convenient notation,
\[
    \Pr(I \in \left[ a,b \right],T,r) = \Pr_{y \sim U(\left[ a,b \right])}( \forall e \in I, \exists (e, \omega) \in T : |y-\omega| < r).
\]
We will attempt to find a discrete analog of the above probability. Let $Q_{t}(T) = \{(e, \floor*{\frac{\omega}{t}}t) : (e,\omega) \in T\}$. $Q_{t}(T)$ is a means of disretizing the possible values of $\omega$ that could lie in $T$. Notice that $\lim_{t \to \infty}Q_{t}(T) = T$, which says that as $T$ gets smaller, $Q_{t}(T)$ becomes a better and better approximation of $T$. Now let $\Tr_{t}(T) = \{a_{0}, a_{1},\ldots,a_{n} \}$ in which $a_{i} = \{e : \exists (e,\omega) \in T, \floor*{\frac{\omega}{t}} = i\}$ for all $i \in \{0,\ldots ,n\}$. This is what we will feed into generalized AWLCO in order to approximate the above probability. It is clear that $\Tr_{t}(T) = \Tr_{t}(Q_{t}(T))$. Let $Tr_{t}(T) + x = \{a_{-x/2} ,\ldots ,a_{-1},a_{0},a_{1},\ldots ,a_{n},a_{n+1},\ldots,a_{n + x/2} \}$ in which $a_{-x/2} = a_{-x/2 + 1} = \cdots  = a_{-1} = a_{n + 1} = \cdots = a_{n + x/2} = \emptyset$, but all other $a_{i}$ are given as above. We now have the following result which allows us to translate co-occurrence into probabilities. 

\begin{theorem}
    \[
    \co(\Tr_{t}(T) + x,I,x) = \frac{1}{t}\int_{0}^{\tau}\mathbbm{1}{(I\in\left[ \max\{0,y-x/2\}, \min\{\tau, y + x/2\} \right],Q_{t}(T))}dy
\]
\end{theorem}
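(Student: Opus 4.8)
The plan is to unpack both sides of the claimed identity in terms of the underlying definitions and show they count the same thing, namely the (normalized) measure of the set of center points $y \in [0,\tau]$ whose associated interval $[\max\{0,y-x/2\},\min\{\tau,y+x/2\}]$ captures a discretized occurrence of every event in $I$. The left side is a co-occurrence count over the padded discretized sequence $\Tr_t(T)+x$; the right side is an integral of an indicator over $y$. My strategy is to partition $[0,\tau]$ into the half-open "cells" $[it,(i+1)t)$ for $i \in \{0,\dots,n\}$ induced by the flooring operation $\lfloor \omega/t\rfloor$, observe that the integrand on the right is constant on each such cell, and then match each cell to exactly one window of length $x$ in $\Tr_t(T)+x$.

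First I would establish the discretization invariance: since $Q_t(T)$ replaces each timestamp $\omega$ by $\lfloor \omega/t\rfloor t$, the event $\exists (e,\omega)\in T:|y-\omega|<r$ with the appropriate $r$ — here $r = x/2 \cdot t$ after rescaling, matching the window half-width — depends on $y$ only through $\lfloor y/t\rfloor$ once we pass to $Q_t(T)$. Concretely, $I \in [\max\{0,y-x/2\},\min\{\tau,y+x/2\}]$ evaluated against $Q_t(T)$ holds for some $y$ in cell $i$ if and only if it holds for all $y$ in cell $i$, because both the cell containing $y$ and the cells reachable within distance $x/2$ (in cell units) are determined by $i$. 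This is where the padding by $x/2$ empty cells on each side enters: it makes the clipping $\max\{0,\cdot\}$ and $\min\{\tau,\cdot\}$ on the right correspond exactly to windows near the boundary of the padded sequence on the left, so that the window $\omega(\Tr_t(T)+x, i+x/2, x)$ — which spans cells $i-x/2$ through $i+x/2$ in original indexing, hence cells $\max\{0,i-x/2\}$ through $\min\{n,i+x/2\}$ after the empty pads are accounted for — sees exactly the discretized events whose floored timestamps lie in $[\max\{0,y-x/2\},\min\{\tau,y+x/2\}]$ for $y$ in cell $i$.

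Then I would translate co-occurrence: by the generalized multiple-item co-occurrence definition given just above (a window co-occurs for $I$ when for every $e\in I$ some set in the window contains $e$), $\co(\Tr_t(T)+x, I, x)$ counts the windows of length $x$ in the padded sequence that contain every event of $I$. By the cell-to-window correspondence from the previous step, these are precisely the cells $i$ on which the indicator on the right-hand side equals $1$. Each such cell contributes a length-$t$ subinterval of $[0,\tau]$ to the integral, so the integral equals $t$ times the number of good cells, and dividing by $t$ yields the co-occurrence count; this gives the equality.

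**The hard part will be** pinning down the boundary bookkeeping precisely: getting the exact correspondence between a window of length $x$ in $\Tr_t(T)+x$ (with its $x/2$ empty cells of padding on each end) and the clipped interval $[\max\{0,y-x/2\},\min\{\tau,y+x/2\}]$, including verifying that the clipping on the continuous side matches the truncation of the window by the sequence endpoints on the discrete side, and confirming off-by-one/parity issues with $x/2$ (implicitly $x$ even, or floor/ceiling conventions) and with the half-open versus closed interval endpoints at multiples of $t$ — a measure-zero set of $y$, but one must say so. Once the indexing dictionary between "$y$ lies in cell $i$" and "window centered appropriately in the padded sequence" is written down carefully, the rest is the short counting argument above.
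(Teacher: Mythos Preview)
Your proposal is correct and follows essentially the same approach as the paper: both arguments rest on the observation that the integrand is piecewise constant (up to a measure-zero set of cell boundaries) on the cells $[it,(i+1)t)$, so the integral reduces to $t$ times a sum of per-cell indicators that coincides with the window co-occurrence count on the padded sequence. The paper runs the computation in the opposite direction---starting from the sum, rewriting it as an integral of a floored indicator, changing variables, and then removing the floor using the fact that $Q_t(T)$ has timestamps at multiples of $t$---but the content is the same, and your identification of the boundary/off-by-one bookkeeping (the $x/2$ padding, the clipping to $[0,\tau]$, and the endpoint ambiguities at $y=it$) as the only real work matches what the paper's chain of equalities is doing.
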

\begin{remark}
The right hand side is simply the measure of the set of $y \in \left[ 0,\tau \right]$ such that all of $I$ occurs within distance $x/2$ of $y$.
\end{remark}
\begin{proof}
    We clearly have that
    \begin{equation}
        \co(\Tr_{t}(T) + x,I,x) = \sum_{i = x/2}^{n + x/2}\mathbbm{1}(I \subseteq w(x,i)).
    \end{equation}
    Now if $I \subseteq w(x,i)$, that means for all $e \in I$ there exists $(e,\omega)\in T$ such that $\floor*{\frac{\omega}{t}}\in \{i-x + 1,\ldots ,i \}$. This means that $\floor*{\frac{\omega}{t}}t \in [ (i -x + 1)t,(i+1)t)$. If $I\not\subseteq w(x,i)$, then for some $e\in I$ there is no such $(e,\omega)\in T$ in which $\floor*{\frac{\omega}{t}}t \in [ (i-x + 1)t,(i+1)t )$. We can therefore write
    \begin{align}
        \sum_{i = x/2}^{n+x/2}\mathbbm{1}(I \subseteq w(x,i))
    &= \sum_{i = x/2}^{n+x/2}\mathbbm{1}(I \in \left[ (i-x + 1)t,(i+1)t \right),Q_{t}(T)) \\
    &= \int_{x/2}^{n+x/2}\mathbbm{1}(I \in \left[ (\floor{y} + 1 - x)t, (\floor{y} + 1)t \right),Q_{t}(T))dy \\
    &= \frac{1}{t}\int_{xt/2}^{(n+x/2)t}\mathbbm{1}\left(I \in \left[ \left( \floor*{\frac{y}{t}} - x +1\right)t, \left( \floor*{\frac{y}{t}} + 1\right)t  \right),Q_{t}(T)\right)dy.
\end{align}
But now recall that for every $(e,\omega)\in Q_{t}(T)$, that $\omega = kt$ for some $k \in \mathbb{Z}_{+}$. So suppose that $I \in \left[ \left( \floor*{\frac{y}{t}} - x + 1\right)t, \left(\floor*{\frac{y}{t}} + 1\right)t \right)$. Then for every $e \in I$, there exists $(e,\omega) \in Q_{t}(T)$ such that $\omega/t \in \{ \floor*{\frac{y}{t}}-x +1,\ldots , \floor*{\frac{y}{t}}\}$.
Since $y/t - x < \floor*{y/t} - x + 1$ and that $y/t \geq \floor*{y/t}$, we have that $I \in \left( y - xt, y \right]$. Now suppose that $I \not\in \left[ \left( \floor*{\frac{y}{t}} - x + 1 \right)t, \left( \floor*{\frac{y}{t}}+1\right)t \right)$ then for some $e \in I$, there is no such $(e,\omega) \in Q_{t}(T)$ in which $\omega/t \in \{ \floor*{\frac{y}{t}}-x +1,\ldots , \floor*{\frac{y}{t}}\}$. Now since $y/t -x \geq \floor{y/t} - x$ and $y/t < \floor{y/t} + 1$ we have that $I \not\in \left( y - xt, y \right]$. Therefore 
\begin{align}
    &\frac{1}{t}\int_{xt/2}^{(n+x/2)t}\mathbbm{1}\left(I \in \left( \left( \floor*{\frac{y}{t}} - x \right)t, \floor*{\frac{y}{t}}t \right],Q_{t}(T)\right)dy \\
    &=  \frac{1}{t}\int_{xt/2}^{(n+x/2)t}\mathbbm{1}\left( I \in \left( y - xt,y \right],Q_{t}(T) \right)dy \\
    &=\frac{1}{t}\int_{xt/2}^{(n+x/2)t}\mathbbm{1}\left( I \in \left[ y - xt,y \right],Q_{t}(T) \right)dy\\
    &= \frac{1}{t}\int_{0}^{nt}\mathbbm{1}\left( I \in \left[ y - xt/2,y+xt/2 \right],Q_{t}(T) \right)dy\\
    &= \frac{1}{t}\int_{0}^{\tau}\mathbbm{1}\left( I \in \left[ \max\{0,y - xt/2\},\min\{\tau,y+xt/2\} \right],Q_{t}(T) \right)dy,
\end{align}
where the last inequality follows from $nt = \tau$ and that no members of $I$ lie below $0$ or above $\tau$.
\end{proof}

The previous theorem shows us that we can represent co-occurrence as a continuous sum, which makes it much easier for us to acheive an error bound for continuous co-occurrence. Let us now find this bound. 

\begin{theorem}
    Let $I(A) =\int_{0}^{\tau}\mathbbm{1}{(I\in\left[ \max\{0,y-x/2\}, \min\{\tau, y + x/2\} \right],A)}dy$. Then if $x > t$,

    \begin{equation}
        |I(Q_{t}(T)) - I(T)| < 
        \tau \frac{t}{x}
\end{equation}
\end{theorem}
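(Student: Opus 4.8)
The plan is to rephrase both quantities as Lebesgue measures of ``good sets'' on the time axis and then to compare those measures. For a set of time stamps $A$ and $e\in I$ write $S_e^A=\{\omega:(e,\omega)\in A\}$ for the finite set of times at which $e$ occurs, and set $\phi_A(y)=\max_{e\in I}\operatorname{dist}(y,S_e^A)$ with $\operatorname{dist}(y,S)=\min_{\omega\in S}|y-\omega|$. Since every time stamp of $T$ — and hence of $Q_t(T)$, because $\lfloor\omega/t\rfloor t\in[0,\omega]\subseteq[0,\tau]$ — lies in $[0,\tau]$, for $y\in[0,\tau]$ the clipped interval $[\max\{0,y-x/2\},\min\{\tau,y+x/2\}]$ contains an occurrence of $e$ in $A$ if and only if $\operatorname{dist}(y,S_e^A)\le x/2$. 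Therefore $I(A)=\big|\{y\in[0,\tau]:\phi_A(y)\le x/2\}\big|$; in particular $I(T)=|G^{x/2}|$ and $I(Q_t(T))=|\widetilde G^{x/2}|$, where $G^\rho=\{y\in[0,\tau]:\phi_T(y)\le\rho\}$ and $\widetilde G^\rho=\{y\in[0,\tau]:\phi_{Q_t(T)}(y)\le\rho\}$.

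Next I would exploit that $Q_t$ moves each time stamp only a little: the stamp of $(e,\omega)$ becomes $\lfloor\omega/t\rfloor t\in(\omega-t,\omega]$, so for every $e$ and every $y$ we get $|\operatorname{dist}(y,S_e^{Q_t(T)})-\operatorname{dist}(y,S_e^{T})|<t$ (compare the nearest point on each side), hence $|\phi_{Q_t(T)}(y)-\phi_T(y)|<t$ pointwise. Consequently $G^{x/2-t}\subseteq\widetilde G^{x/2}\subseteq G^{x/2+t}$, and since $\{G^\rho\}$ is nondecreasing in $\rho$, both $I(T)=|G^{x/2}|$ and $I(Q_t(T))=|\widetilde G^{x/2}|$ lie in the interval $[\,|G^{x/2-t}|,\,|G^{x/2+t}|\,]$. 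Thus
\[
 |I(Q_t(T))-I(T)|\;\le\;|G^{x/2+t}|-|G^{x/2-t}|\;=\;\big|\{\,y\in[0,\tau]:x/2-t<\phi_T(y)\le x/2+t\,\}\big|,
\]
and it remains to bound the measure of this ``level band'' by $\tau t/x$.

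For the last step I would use that $\phi_T$ is continuous and piecewise linear with all slopes in $\{-1,+1\}$ (it is a maximum of the tent functions $\operatorname{dist}(\cdot,S_e^T)$), so by the change of variables on each linear piece the band has measure $\int_{x/2-t}^{x/2+t}N(\rho)\,d\rho$, where $N(\rho)=\#\{y\in[0,\tau]:\phi_T(y)=\rho\}$. The heart of the argument is that for $\rho$ near $x/2$ these crossings can be charged to pairwise disjoint subintervals of $[0,\tau]$ of length at least $x$: a maximal excursion of $\phi_T$ above level $\rho\approx x/2$ is a non-co-occurrent stretch, and once it is extended by the two slope-$\pm1$ descents down to level $0$ (pinned by the occurrences bounding the stretch) it fills out a window of length $\ge x$; disjointness of these windows gives $N(\rho)\le \tau/x$ apart from the two excursions abutting $0$ and $\tau$, whence the band measure is $\le\int_{x/2-t}^{x/2+t}(\tau/x)\,d\rho=2t\tau/x$, which a careful treatment of the boundary excursions and of the fact that the shift is always to the \emph{left} should sharpen to the stated $\tau t/x$.

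\textbf{Main obstacle.} I expect the charging argument to be the delicate point. A crude count — bounding $N(\rho)$ by the number of connected components of $G^\rho=\bigcap_{e}\big(S_e^T\oplus[-\rho,\rho]\big)$ — only yields $O(|I|\,\tau/x)$, since the intersection of $|I|$ unions of $\Theta(\tau/x)$ intervals can genuinely be fragmented into $\Theta(|I|\,\tau/x)$ pieces; so the naive level-set bound carries a spurious factor of $|I|$. Removing it seems to require comparing $|\widetilde G^{x/2}|$ and $|G^{x/2}|$ \emph{directly}, matching the slivers of $G^{x/2}$ lost when each occurrence interval slides left against those gained on the opposite side, so that the net change telescopes over the $\le\tau/x$ length-$x$ features rather than over all components. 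An alternative I would try is to combine the preceding theorem's identity $I(Q_t(T))=t\cdot\co(\Tr_t(T)+x,I,x)$ with the analogous expression for $I(T)$ obtained as a limit under grid refinement, reducing the estimate to how much a co-occurrence count can change when the grid is coarsened by one level.
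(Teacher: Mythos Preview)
Your reformulation is clean and correct: writing $I(A)=|\{y\in[0,\tau]:\phi_A(y)\le x/2\}|$ with $\phi_A(y)=\max_{e\in I}\operatorname{dist}(y,S_e^A)$, together with the pointwise Lipschitz estimate $|\phi_{Q_t(T)}-\phi_T|<t$ and the sandwich $|G^{x/2-t}|\le I(T),\,I(Q_t(T))\le |G^{x/2+t}|$, is all valid. The route, however, is genuinely different from the paper's, and the obstacle you flag is real, not cosmetic.

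The level-band quantity $|G^{x/2+t}|-|G^{x/2-t}|$ that you reduce to is \emph{not} bounded by $\tau t/x$ in general. Take $|I|=2$ with item $a$ placed at $0,x,2x,\ldots$ and item $b$ at $x/2,3x/2,\ldots$. Then $\phi_T$ is a triangle wave of period $x/2$ oscillating between $x/4$ and $x/2$; for small $t$ one has $G^{x/2+t}=[0,\tau]$ while $G^{x/2-t}$ misses a width-$2t$ neighbourhood of each of the $2\tau/x$ peaks, so $|G^{x/2+t}|-|G^{x/2-t}|=4t\tau/x$. With $m$ items evenly interleaved this becomes $2m\,t\tau/x$. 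So the sandwich you obtain, while correct, is already too loose by a factor $\Theta(|I|)$ at the level of the \emph{bound}, not merely in the counting of crossings; no bookkeeping on $N(\rho)$ can repair it, because the intermediate inequality itself is the leak. The one-sidedness of the shift does not help here either: it controls the direction in which each individual timestamp moves, but $\phi_{Q_t(T)}-\phi_T$ still takes both signs, so you cannot replace the two-sided band by a one-sided one.

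The paper does not pass through level sets at all. It bounds directly the measure of the set where the two indicator functions $\mathbbm{1}(y,T)$ and $\mathbbm{1}(y,Q_t(T))$ disagree, by partitioning $[0,\tau]$ into the grid cells $[kt,(k+1)t]$ and arguing cell by cell. The key observation (in the main regime $t\le x/2$) is a \emph{sparsity-of-error} statement: whenever the two indicators disagree somewhere in a cell, some item's most recent occurrence straddles the window boundary, and this forces the two indicators to \emph{agree} on roughly the next $x/(2t)$ cells on either side. Hence at most a $t/x$ fraction of cells can carry any disagreement, and each such cell contributes at most $t$, giving $\tau t/x$. This is precisely the ``compare $\widetilde G^{x/2}$ and $G^{x/2}$ directly and let the slivers telescope'' idea you mention at the end, organised along the grid; if you want to push your approach through, that is the place to start, and you should abandon the level-band inequality as the controlling estimate.
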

\begin{proofsketch}
    We first write
    \begin{align}
        I(Q_{t}(T)) - I(T) &= \int_{0}^{\tau}\mathbbm{1}{(I\in\left[ \max\{0,y-x/2\}, \min\{\tau, y + x/2\} \right],Q_{t}(T))}dy \\
        &-\int_{0}^{\tau}\mathbbm{1}{(I\in\left[ \max\{0,y-x/2\}, \min\{\tau, y + x/2\} \right],T)}dy,
    \end{align}
    so it suffices to bound the length of region where the integrands differ. For the sake of ease, denote
    \[
        \mathbbm{1}(y,Q_{t}(T)) = \mathbbm{1}{(I\in\left[ \max\{0,y-x/2\}, \min\{\tau, y + x/2\} \right],Q_{t}(T))}
    \] 
    and
    \[
        \mathbbm{1}(y,T) = \mathbbm{1}{(I\in\left[ \max\{0,y-x/2\}, \min\{\tau, y + x/2\} \right],T)}.
    \]

    The trick here is to consider only $y \in \left[ kt,(k+1)t \right]$ at a time. It can be shown that if $t > x/2$, then the measure of the set in each of these segments in which $\mathbbm{1}(y,T)$ and $\mathbbm{1}(y,Q_{t}(T))$ differ, is at most $t-x/2$. There are at most $\tau/t$ such segments. Therefore in this case $|I(Q_{t}(T)) - I(T)| < \frac{\tau}{t}(t- x/2) = \tau - \tau\frac{x/2}{t}$ And since $x/t > 1/2$, we have that $\tau -\tau\frac{x/2}{t} < \tau/2 < \tau\frac{t}{x}$.

    Now suppose that $t \leq x/ 2$. It can be shown now that measure of the set of $y$ in each segment that differ are at most $t$. But if this were to happen, it would imply that they do not differ for the following $x/2t$ segments on either side. This means there is error at most $t$ for $x/t$ segments each of length $t$ which gives a total error bound of $\tau\frac{t}{tx/t} = \tau \frac{t}{x}$.
\end{proofsketch}

Now putting these results together, we obtain that
\begin{equation}
    |\co(\Tr_{t}(T) + x, I ,x)t/\tau - \Pr (I \in \left[ a,b \right], T ,x/2)| < \frac{t}{x}.
\end{equation}

\section{Related Work} 
\label{sec:relatedwork}

{\bf Counting in Streams - } In count-distinct problem, the goal is to know the number of unique elements in a stream~\cite{DBLP:conf/focs/FlajoletM83,MankuM02}. 
In bit-counting problem, the goal is to maintain the frequency count of 1's in the last $k$ bits of a bit stream of size $N$. 
Datar et al.\ propose an approximate algorithm with for the bit-counting problem with $O(\log^2k)$ space complexity~\cite{DBLP:journals/siamcomp/DatarGIM02}. 
Existing counting algorithms for streams assume the {\em sliding-window} model of computation, that is answering queries or mining is done over the last $w$ elements seen so far~\cite{DBLP:books/sp/16/DatarM16}.  
However, \name{} introduces a new analysis model -- all-window-length analysis model -- which is compelled to analyze and query all windows of all lengths starting from the beginning of a stream or anytime in the the past. 
To that end, \name{} presents an efficient and exact itemset counting algorithm for the all-window-length analysis model.   

The frequent itemset mining in stream is a well-studied problem that adheres to the counting problem~\cite{CormodeH08}.  
The seminal work by Manku and Motwani presents an algorithm for estimating the frequency count of itemsets in a stream and identify those itemsets that occur in at least a fraction $\theta$ of the stream seen so far with some error parameter $\epsilon$~\cite{MankuM02}. 
For example, when the input is a stream of transactions where each transaction is a set of items, the goal is to find the most frequent itemsets within transactions. The challenge is to consider variable-length itemsets and avoid the combinatorial enumeration of all possible itemsets. 
Many existing frequent itemset mining algorithms 
(with exception of~\cite{DBLP:conf/icdm/LeungK06,DBLP:conf/kdd/ChangL03}) obtain approximate results with error bounds. 
A variation of frequent itemset mining is the  problem of mining frequent co-occurrence patterns across multiple data streams~\cite{YuY0LP15}. 
The definition of co-occurrence patterns is slightly different than co-occurrence itemsets considered by \name{}. 
A co-occurrence pattern is a group of items that appear consecutively showing tight correlations between these items. 
A frequent co-occurrence pattern is the pattern that appears in at least $\theta$ streams within 
a time period of length $\tau$ 
and the appearance of the pattern in each stream  happens within a time window of $\delta$ or smaller. 
In this paper, \name{} presents an all-window  length frequency counting for a query itemset. 
A natural extension of the itemset frequency counting of presented by \name{} is mining 
frequent itemsets in all window-lengths. 

{\bf Affinity Analysis - } 
Zhong et al.\ defined \emph{reference affinity} for data elements on an access trace. A set of data elements belong to the same affinity group if they are always accessed close to each other~\cite{Zhong+:PLDI04}.  The closeness is defined by $\mathit{k-}$linked-ness.  They proved that reference affinity forms a unique partition of data for every $k$, and the relation between different $k$s is hierarchical, i.e.\ the affinity groups at link length $k$ are a finer partition of the groups at $k+1$.  This definition requires \emph{strict} co-occurrence in that every occurrence of a group element must be accompanied by all other elements of the group.  \emph{Weak reference affinity}~\cite{Zhang+:findingreference04} introduces a second parameter, affinity threshold.  It adheres to the unique and hierarchical partition properties with respect to both parameters.  
Zhang et al.\ showed that neither strict reference affinity, nor weak reference affinity can efficiently be computed~\cite{Zhang+:POPL06}. Thus they gave a heuristic solution and adapted it to use sampling.  The average time complexity of their algorithm is $O(N\delta\omega^2+N\delta\pi)$, where $N$ is the length of the trace, $\delta$ is the sampling rate, $\omega$ is the size of the affinity group, and $\pi$ is the average time length of windows containing accesses to all members of the group $\omega$.  Lavaee et al.\ gave an $O(L\delta\omega^2)$ algorithm to compute the affinity for all sub-groups of sizes up to $\omega$~\cite{Lavaee+:CC19}.  Reference affinity has been used to optimize the memory layout in data structure splitting~\cite{Zhong+:PLDI04}, whole-program code layout~\cite{Lavaee+:CC19}, and both~\cite{Zhang+:POPL06}. 


\section{Discussion and Future Work}

{\bf Applications} The all-window-length co-occurrence has applications in text analysis, the optimization of the memory layout of programs, and accelerating the search for RNA sequences in genomes. 
In terms of practical applications, our plan is to develop interactive tools that enable the exploration of sequences of events and genomics data. 
Projects such as {\tt cooccurNet}~\cite{ZouWDWWLJP17} provide a basis that can be extended with all-window-length co-occurrence analysis  functionalities. 

\noindent {\bf Mining Problems} In this paper, we expounded co-occurrence counting of itemsets and patterns in the all-window-length analysis model. Going forward, we study mining algorithms in this analysis model, including mining frequent closed  itemsets, i.e. given a sequence $T$ find the top-$k$ itemsets that have highest co-occurrences in an arbitrary window size and for a frequent itemset $X$, there exists no super-pattern $X\subset Y$, with the same co-occurrence as $X$. 
The algorithm requires to mine frequent itemsets for all window lengths in one pass. 
\eat{
\noindent {\bf Extending to Timestamped Sequences} The proposed algorithms operate on a sequence of data points taken at equally spaced points in time. Thus, our sequences are discrete-time data. We plan to study co-occurrence counting and frequent itemset mining in a series of data points indexed in continuous time order. In the continuous setting, we define $T = \{(e,\omega) : e \in I, \omega \in \left[ 0, \tau \right]\}$, where $\tau \geq 0$, to be a set of timestamps in consideration. We can now consider the co-occurrence of items and patterns over an interval of time, which can now be considered as  events not items. We wish to compute the probability of a groups of events happening in time scale $r$:  
\[
    \Pr(I\in \left[ a,b \right],T,r) = \Pr_{y \sim U(\left[ a,b \right])}( \forall e \in I, \exists (e, \omega) \in T : |y-\omega| < r).
\] 
This naturally leads to an analytic definition and suggests a continuous analog of co-occurrence. 
}

\section*{Acknowledgements}
We would like to give special thanks to Lu Zhang and Katherine Seeman for their efforts for the implementation and experimental evaluation of our algorithms. 
\bibliography{references,all}

\begin{thebibliography}{10}

\bibitem{AgrawalIS93}
Rakesh Agrawal, Tomasz Imielinski, and Arun~N. Swami.
\newblock Mining association rules between sets of items in large databases.
\newblock In {\em {SIGMOD}}, pages 207--216, 1993.

\bibitem{AgrawalS94}
Rakesh Agrawal and Ramakrishnan Srikant.
\newblock Fast algorithms for mining association rules in large databases.
\newblock In {\em {VLDB}}, pages 487--499, 1994.

\bibitem{DBLP:conf/kdd/ChangL03}
Joong~Hyuk Chang and Won~Suk Lee.
\newblock Finding recent frequent itemsets adaptively over online data streams.
\newblock In {\em {SIGKDD}}, pages 487--492.

\bibitem{CormodeH08}
Graham Cormode and Marios Hadjieleftheriou.
\newblock Finding frequent items in data streams.
\newblock {\em {PVLDB}}, 1(2):1530--1541, 2008.

\bibitem{DBLP:journals/siamcomp/DatarGIM02}
Mayur Datar, Aristides Gionis, Piotr Indyk, and Rajeev Motwani.
\newblock Maintaining stream statistics over sliding windows.
\newblock {\em {SIAM} J. Comput.}, 31(6):1794--1813, 2002.

\bibitem{DBLP:books/sp/16/DatarM16}
Mayur Datar and Rajeev Motwani.
\newblock The sliding-window computation model and results.
\newblock In {\em Data Stream Management - Processing High-Speed Data Streams},
  pages 149--165. 2016.

\bibitem{PMID:18032723}
Xiangjun Du, Zhuo Wang, Aiping Wu, Lin Song, Yang Cao, Haiying Hang, and
  Taijiao Jiang.
\newblock Networks of genomic co-occurrence capture characteristics of human
  influenza a (h3n2) evolution.
\newblock 18(1), January 2008.

\bibitem{DBLP:conf/focs/FlajoletM83}
Philippe Flajolet and G.~Nigel Martin.
\newblock Probabilistic counting.
\newblock In {\em {FOCS}}, pages 76--82, 1983.

\bibitem{Lavaee+:CC19}
Rahman Lavaee, John Criswell, and Chen Ding.
\newblock Codestitcher: inter-procedural basic block layout optimization.
\newblock In {\em Proceedings of the International Conference on Compiler
  Construction}, pages 65--75, 2019.

\bibitem{DBLP:conf/icdm/LeungK06}
Carson~Kai{-}Sang Leung and Quamrul~I. Khan.
\newblock Dstree: {A} tree structure for the mining of frequent sets from data
  streams.
\newblock In {\em {ICDM}}, pages 928--932, 2006.

\bibitem{LevyG14}
Omer Levy and Yoav Goldberg.
\newblock Dependency-based word embeddings.
\newblock In {\em {ACL}}, pages 302--308, 2014.

\bibitem{Liu:2018:TWC:3291291.3291322}
Yumeng~(Lucinda) Liu, Daniel Busaba, Chen Ding, and Daniel Gildea.
\newblock All timescale window co-occurrence: Efficient analysis and a possible
  use.
\newblock In {\em Proceedings of the 28th Annual International Conference on
  Computer Science and Software Engineering}, CASCON '18, pages 289--292,
  Riverton, NJ, USA, 2018. IBM Corp.

\bibitem{MankuM02}
Gurmeet~Singh Manku and Rajeev Motwani.
\newblock Approximate frequency counts over data streams.
\newblock In {\em {VLDB}}, pages 346--357, 2002.

\bibitem{mikolov2013distributed}
Tomas Mikolov, Ilya Sutskever, Kai Chen, Greg~S Corrado, and Jeff Dean.
\newblock Distributed representations of words and phrases and their
  compositionality.
\newblock In {\em Advances in Neural Information Processing Systems}, pages
  3111--3119, 2013.

\bibitem{pennington-socher-manning:2014:EMNLP2014}
Jeffrey Pennington, Richard Socher, and Christopher Manning.
\newblock Glove: Global vectors for word representation.
\newblock In {\em Proceedings of the 2014 Conference on Empirical Methods in
  Natural Language Processing (EMNLP)}, pages 1532--1543, Doha, Qatar, 2014.

\bibitem{rajaraman2014mining}
Anand Rajaraman, Jure Leskovec, and Jeffrey~D. Ullman.
\newblock {\em Mining Massive Datasets}.
\newblock 2014.

\bibitem{Schutze97}
Hinrich {Sch\"{u}tze}.
\newblock {\em Ambiguity Resolution in Language Learning -- Computational and
  Cognitive Models}.
\newblock Number~10 in CSLI Lecture Notes Series. Center for the Study of
  Language and Information, Stanford, California, 1997.

\bibitem{Shapiroe}
Jason~W. Shapiro and Catherine Putonti.
\newblock Gene co-occurrence networks reflect bacteriophage ecology and
  evolution.
\newblock {\em mBio}, 9(2), 2018.

\bibitem{SiddiquiLWKP20}
Tarique Siddiqui, Paul Luh, Zesheng Wang, Karrie Karahalios, and Aditya~G.
  Parameswaran.
\newblock Shapesearch: {A} flexible and efficient system for shape-based
  exploration of trendlines.
\newblock In {\em {SIGMOD}}, pages 51--65, 2020.

\bibitem{YuY0LP15}
Ziqiang Yu, Xiaohui Yu, Yang Liu, Wenzhu Li, and Jian Pei.
\newblock Mining frequent co-occurrence patterns across multiple data streams.
\newblock In {\em {EDBT}}, pages 73--84, 2015.

\bibitem{Zhang+:POPL06}
Chengliang Zhang, Chen Ding, Mitsunori Ogihara, Yutao Zhong, and Youfeng Wu.
\newblock A hierarchical model of data locality.
\newblock In {\em Proceedings of the ACM SIGPLAN-SIGACT Symposium on Principles
  of Programming Languages}, pages 16--29, 2006.

\bibitem{Zhang+:findingreference04}
Chengliang Zhang, Yutao Zhong, Chen Ding, and Mitsunori Ogihara.
\newblock Finding reference affinity groups in trace using sampling method.
\newblock Technical report, Department of Computer Science, University of
  Rochester, 2004.

\bibitem{Zhong+:PLDI04}
Yutao Zhong, Maksim Orlovich, Xipeng Shen, and Chen Ding.
\newblock Array regrouping and structure splitting using whole-program
  reference affinity.
\newblock In {\em Proceedings of the ACM {SIGPLAN} Conference on Programming
  Language Design and Implementation}, pages 255--266, 2004.

\bibitem{ZouWDWWLJP17}
Yuanqiang Zou, Zhiqiang Wu, Lizong Deng, Aiping Wu, Fan Wu, Kenli Li, Taijiao
  Jiang, and Yousong Peng.
\newblock cooccurnet: an r package for co-occurrence network construction and
  analysis.
\newblock {\em Bioinformatics}, 33(12):1881--1882, 2017.

\end{thebibliography}
\newpage
\pagebreak
\section*{Appendix}
\subsection*{Proof of Theorem \ref{thm:hupdates}}
\begin{proof}
        We have a maximal gap for every subset of $I$ containing $T[i]$. This collection of subsets can be written as
    \[
        C = \{A \subseteq I | A = \{T\left[ i \right]\}\cup B, B\subseteq I\setminus\{T\left[ i \right]\}\}.
    \]
    For each $A \in C$, the update to $H$ is ${(-1)}^{|A|+1}$ to $H\left[\tesla(r_i^1(A),i-1)\right]$ as we have found an $A$-gap of size $\tesla(r_{i}^{1}(A),i-1)$ at index $i$. Let $C_{k} = \{A \in C | \tesla(r_i^1(A),i-1) = k\}$. The incremental updates can be expressed by
    \begin{equation}
        H_{i}[k]= H_{i - 1} \left[ k \right] + \sum_{A \in C_{k}} {(-1)}^{|A|+1},
        \label{update}
    \end{equation}
    for each $k$.
     Suppose $T[i] = r_i^{j_0}(I)$ where $j_0 < |I|$, i.e., $T[i]$ is not the item seen furthest in the past most recently. Then there are $\binom{|I| - j }{\ell}$ sets $A \in C$ of length $\ell + 1$ in which $T[i] = r_i^1(A)$. Thus for $k = \tesla(T[i],i-1)$, we have
     \begin{equation}
     H_{i}[k]-  H_{i - 1}\left[ k \right] =  \sum_{A \in C_{k}} {(-1)}^{|A|+1} = \sum_{\ell = 0}^{|I| - j}\binom{|I| - j }{\ell}(-1)^\ell = (1 - 1)^{|I| - j} = 0.
     \label{eq:otherupdate}
     \end{equation}
     Now for every $ j < j_0$ (which means that $j$ never equals $|I|- 1$ in this case), we have that there are $\binom{|I|-j -1}{\ell}$ members $A\in C$ of length $\ell + 2$ in which, $r^j_i(I) = r_i^1(A)$. Therefore for $k = \tesla(r_i^j(I),i-1)$,
     \[
     H_{i}[k]-  H_{i - 1}\left[ k \right] =  \sum_{A \in C_{k}} {(-1)}^{|A|+1} = \sum_{\ell = 0}^{|I| - j - 1}\binom{|I| - j -1}{\ell}(-1)^{(\ell+1)} = -(1 - 1)^{|I| - j-1} = 0.
     \]
     But for $|I| \geq j > j_0$, there are no such sets $A\in C$ in which $r_i^j(I) = r_i^1(A)$, as $T[i]=r_i^{j_0}(I) $ is contained in all $A \in C$.
     
     But if $j_0 = |I|$, i.e., $T[i] = r_i^{|I|}(I)$, then for each $j < |I|-1$, there are again $\binom{|I|-j -1}{\ell}$ members $A\in C$ of length $\ell + 2$ in which, $r^j_i(I) = r_i^1(A)$, so again equation (\ref{eq:otherupdate}) holds for $k = \tesla(r_i^j(I),i-1)$, giving no net updates for such $k$.
     But there is exactly one $A\in C$ in which $r_i^{|I|-1}(I) = r_i^1(A)$, namely, $\{r_i^{|I|-1}(I), T[i]\}$, and there is also one $A\in C$ in which $T[i] = r_i^{|I|}(I) = r_i^1(A)$, which is $\{T[i]\}$. Therefore we have 
    \begin{align*}
        H_{i}[\tesla(r_{i}^{|I|}(I),i-1)] &= H_{i-1}[\tesla(r_{i}^{|I|}(I),i-1)] + 1,\\
        H_{i}[\tesla(r_{i}^{|I|-1}(I),i-1)] &= H_{i-1}[\tesla(r_{i}^{|I|}(I),i-1)] - 1.
    \end{align*}
    \end{proof}
    \eat{
    The update for each $H[k]$ can be written as
    \begin{equation}
        H_{i}[k]= H_{i - 1} \left[ k \right] + \sum_{A \in C_{k}} {(-1)}^{|A|},
        \label{eq:update}
    \end{equation}
    where $C_{k}$ is the set of subsets of $I$ which contains $T[i]$ and has that $\min_{e\in A}\tesla(e,i-1) = k$. Let $j$ be place of $\arg\min_{e \in A} \tesla(e,i-1)$ in $I$ ordered by  $\tesla(e,i-1)$. Suppose $j \neq |I|-1,|I|$ and $T[i]\neq \arg\min_{e\in A}\tesla(e,i-1)$. Then there are $\binom{|I| - j}{\ell + 1}$ subsets of $I$ of size $\ell - 1$ in $C_{k}$. Therefore inserting back into equation (\ref{eq:update}), we have that
    \[
        H_{i}[k]- H_{i-1}[k] = \sum_{\ell = 1}^{|I|-j}\binom{|I| - j}{\ell - 1}(-1)^{|A|} = 0
    \]
    The similar result follows when $\arg\min_{e\in A}\tesla(e,i-1) = T[i]$ when $j \neq |I|- 1,|I|$.
    \eat{
    Each $A \in C_{k}$ contains $r_{i}^{j_i(k)}(I)$ and none of $r_{i}^{1}(I),\ldots ,r_{i}^{j_i(k)-1}(I)$ as $ \tesla(r_{i}^{1}(I),i-1) < \cdots < \tesla(r_{i}^{j_i(k)-1}(I),i-1) < \tesla(r_{i}^{j_i(k)}(I), i-1)$.
    This collection of subsets can be written as
    \[
        C = \{A \subseteq I | A = \{T\left[ i \right]\}\cup B, B\subseteq I\setminus\{T\left[ i \right]\}\}.
    \] 
    For each $A \in C$, the update to $H$ is ${(-1)}^{|A|}$ to $H\left[\min_{e\in A}\tesla(e,i-1)\right]$ as we have found an $A$-gap of size $\min_{e\in A}\tesla(e,i-1)$ at index $i$. Let $C_{k} = \{A \in C | \min_{e \in A }\tesla(e,i-1) = k\}$. The updates can be expressed by 
    \begin{equation}
        H_{i}[k]= H_{i - 1} \left[ k \right] + \sum_{A \in C_{k}} {(-1)}^{|A|}.
        \label{update}
    \end{equation}
    Now we restrict our set of updates indices to $k$ such that there exists $e \in I$ in which $\tesla(e,i-1) = k$. For each remaining $k$, there exists $j_{i}(k)$ in which $\tesla(r_{i}^{j_{i}(k)}(I),i-1) = k$. Each $A \in C_{k}$ contains $r_{i}^{j_i(k)}(I)$ and none of $r_{i}^{1}(I),\ldots ,r_{i}^{j_i(k)-1}(I)$ as $ \tesla(r_{i}^{1}(I),i-1) < \cdots < \tesla(r_{i}^{j_i(k)-1}(I),i-1) < \tesla(r_{i}^{j_i(k)}(I), i-1)$. This means there are $\binom{|I| - j_i(k) - \eta_{i}(k)}{\ell - \eta_{i}(k)}$ subsets of $I$ of size $\ell + 1$ in $C_{k}$ when $j_i(k)\neq |I|-1$, and $j_i(k)\neq|I|$, where $\eta_{i}(k) = 0$ if $\tesla(T[i],i-1) = k$ and $\eta_{i}(k) = 1$ otherwise. This gives
    \[
        |C_{k}| = \sum_{\ell = \eta_{i}(k)}^{|I|-j_i(k)}\binom{|I|-j_i(k) - \eta_{i}(k)}{\ell - \eta_{i}(k)} = \sum_{l = 0}^{|I|-j_i(k)-\eta_{i}(k)}\binom{|I|-j_i(k)-\eta_{i}(k)}{\ell},\quad j_i(k)\neq |I|-1,|I|.
    \]
    Factoring in the updates according to equation (\ref{update}), we have
    \[
        H_{i}[k] - H_{i-1}[k] = \sum_{\ell = 0}^{|I|-j_i(k)-\eta_{i}(k)}\binom{|I|-j_i(k)-\eta_{i}(k)}{\ell}{(-1)}^{l+1} = -{(1-1)}^{|I|-j_i(k)-\eta_{i}(k)} = 0, \quad j_i(k)\neq |I|-1,|I|,
    \]
    by the binomial theorem.
    }
    Now there are three remaining cases to consider for possible configurations of $j = |I|-1$ and $j= |I|$.

    When $T[i] = r_{i}^{|I|-1}(I)$, there are two sets in which $\min_{e \in A}\tesla(e,i-1) = \tesla(r_{i}^{|I|-1}(I),i-1)$, namely, $\{r_{i}^{|I|-1}(I),r_{i}^{|I|}(I)\}$ and $\{r_{i}^{|I|-1}(I)\}$ in which their updates are given according to equation (\ref{eq:update}).

    When $T[i]\neq r_{i}^{|I|-1}(I)$ and $T[i]\neq r_{i}^{|I|}(I)$, there are no such sets in which $\min_{e\in A}\tesla(e,i-1)= \tesla(r_{i}^{|I|-1}(I),i-1)$. 

    But in the last remaining case, when $r_{i}^{|I|}(I) = T[i]$, there is one set in which $\min_{e \in A }\tesla(e,i-1) = \tesla(r_{i}^{|I|}(I),i-1)$, namely, $\{T[i]\}$. There is also one set for which $\min_{e \in A}\tesla(e,i-1) = \tesla(r_{i}^{|I|-1}(I),i-1)$, namely, $\{r_{i}^{|I|-1}(I),T[i]\}$. By equation (\ref{eq:update}), these two correspond to the updates 
    \begin{align*}
        H_{i}[\tesla(r_{i}^{|I|}(I),i-1)] &= H_{i-1}[\tesla(r_{i}^{|I|}(I),i-1)] - 1,\\
        H_{i}[\tesla(r_{i}^{|I|-1}(I),i-1)] &= H_{i-1}[\tesla(r_{i}^{|I|}(I),i-1)] + 1,
    \end{align*}
    proving the result.}
\subsection*{Proof of Theorem \ref{thm:mult}}
\begin{proof}
    Suppose without loss of generality that $X_{i} \subseteq I$. We wish to find $H_{i}-H_{i-1}$.
    Denote
    \[
        X_{i} = \{r^1_{i}(X_{i}),r^2_{i}(X_{i}),\ldots ,r^{|X_{i}|}_{i}(X_{i})\} = \{x_1,x_2,\ldots ,x_{|X_{i}|}\},
    \]
    as $r$ defined before.
    Now let
    \[
        U_{i} = \{A \subseteq I: A = B \cup \{x_{j}\},B\subseteq I \setminus \{x_{j}\}, j \in \left[ |X_{i}| \right] \},
    \]
    which in words, is all subsets of $I$ that contain at least one member of $X_{i}$.
    Observe that
    \begin{align*}
        U_{i}
    &= \bigcup_{j = 1}^{|X_{i}|}\{A\subseteq I : A =B\cup \{x_{j}\}, B \subseteq I\setminus \{x_{j}\}\}. \\
\end{align*}
Now let $K_{j} = \{A\subseteq I : A = B\cup \{x_{j}\}, B \subseteq I\setminus \{x_{j}\}\}$ for all $j$. Therefore $U_{i} =\bigcup_{j = 1}^{|X_{i}|}K_{j}$.

The update rule is known for each $K_{j}$ based on our previous result. The remains the of the proof is as follows. We can leverage the update rule currently known to compute the total update. But the intersection of $K_{j}$'s is non-empty, meaning if we update according to each $K_{j}$, we would be overcounting some members of $U$. Once this is determined, we will find the update rule according for each arbitrary intersection of these $K_{j}$'s, which completes the proof.

Define $\left< \cdot \right>_{i}$ to be a mapping from subsets of $I$ to an integer valued $n$ dimensional vector. $\left< A \right>_{i}^{k}$ is the sum of the number of maximal gaps of length $k$ ending at index $i$ given by even subsets of $A$, minus the sum of the number of maximal gaps of length $k$ ending at index $i$ given by the odd subsets of $A$. Using this new definition, $\left<U_{i}\right>^{k}_{i} = H_{i}[k]-H_{i-1}\left[ k \right]$. We can now appeal to the inclusion exclusion principle to write that

\begin{equation}
    H_{i} - H_{i-1} = \left< U_{i} \right> = \left< \bigcup_{j= 1}^{|X_{i}|}K_{j}\right> = \sum_{J\subseteq \left[|X_{i}|\right]}{(-1)}^{|J| +1}\left< \bigcap_{j \in J}K_{j}\right>.
    \label{updateprinciple}
\end{equation}
The right hand side of the above equality will now be used.

Denote for any $J\subseteq \left[ |X_{i}| \right]$,
\[
    \mathcal{K}_{J}= \bigcap_{j\in J}K_{j}.
\]
Let $\mathcal{X}_{J}$ be the set of members of $X_{i}$ that lie in every member of $\mathcal{K}_{J}$. Observe that
\[
    \mathcal{X}_{J}=\bigcap_{G\in \mathcal{K}_{J}}G.
\]
It also follows that $\mathcal{X}_{J} = \bigcup_{j \in J}\{x_{j}\}$. Moreover, we can write
\[
    \mathcal{K}_{J}= \{A\subseteq I : A = \mathcal{X}_{J}\cup B, B\subseteq I\setminus \mathcal{X}_{J}\}.
\]
For each set $A \in \mathcal{K}_{J}$, there is a corresponding set $A'$ in $\mathcal{K}'_{J} = \{A \subseteq I : A = \{r_{i}^{1}(\mathcal{X}_{J})\}\cup B, B \subseteq I \setminus \mathcal{X}_{J}\}$, in which $\left< A \right> = {(-1)}^{|J|+1}\left< A' \right>$. This correspondence is easy to find. Let $A \in \mathcal{K}_{J}$. Thus $A = \mathcal{X}_{J}\cup B$, for some $B\in I \setminus \mathcal{X}_{J}$. Then the corresponding set $A' \in \mathcal{K}'_{J}$ is $\{r_{i}^{1}(\mathcal{X}_{J})\}\cup B$. This is clear, because items that lie in every $A \in \mathcal{K}_{J}$ that never satisfy $\arg\min_{e \in A}\tesla(e,i-1)$ for all $A$ never contribute towards any updates and hence can be ignored, except they may change the parity of the set and hence change the sign of the update. From here, we can apply the first theorem taking $I$ in that theorem to be $I\setminus \mathcal{X}_{J}$, which gives
\begin{equation}
    \left<\mathcal{K}_{J}\right>_{i}^{k} = {(-1)}^{|J|+1}\left<\mathcal{K}'_{J}\right>_{i}^{k} = {(-1)}^{|J|+1}
    \begin{cases}
        1 \ \text{for}\ k = \tesla(r_{i}^{|I\setminus\mathcal{X}_{J}|}((I\setminus \mathcal{X}_{J})\cup\{x^{*}\}),i-1)\\
        -1\ \text{for}\ k = \tesla(r_{i}^{1}(\mathcal{X}_{J}),i-1)\\
        0\ \text{otherwise},
    \end{cases}
    \label{j_update}
\end{equation}
when $r_{i}^{1}(\mathcal{X}_{J}) = r_{i}^{|I\setminus \mathcal{X}_{J}|+1}((I\setminus \mathcal{X}_{J})\cup\{r_{i}^{1}(\mathcal{X}_{J})\})$. Every update is $0$ otherwise.

Now assume for all $x\in X_{i}$ and $e \in I\setminus X_{i}$, $\tesla(x,i-1) \geq \tesla(e,i-1)$. For if this does not hold for some $x' \in X_{i}$, then by the above, no updates occur due to $x'$, so analysis is the same.

We now wish to compute the right hand side of equation (\ref{updateprinciple}). We can employ equation (\ref{j_update}) for each $\mathcal{K}_{J}$. If $r_{i}^{1}(\mathcal{X}_{J})\neq r_{i}^{|I\setminus \mathcal{X}_{J}|+1}((I \setminus \mathcal{X}_{J})\cup \{r_{i}^{1}(\mathcal{X}_{J})\})$, that is, the first ranked item of $\mathcal{X}_{J}$ is not ranked below all of $I \setminus \mathcal{X}_{J}$, then $\left< \mathcal{K}_{J}\right> = \mathbf{0}$. We claim that the $J$ in which $\left< \mathcal{K}_{J}\right> \neq \mathbf{0}$ are of the following form:

\begin{equation}
    J_{m} = \{|X_{i}| - b: b \in \left[ m \right]\},
\end{equation}
for $0 \leq  m < |X_{i}|$. We first show that if $J \neq J_{m}$ for some $m$, then $\left< \mathcal{K}_{J}\right> = \mathbf{0}$. If $J\neq J_{m}$ for some $m$, then there exists $b_0$ such that $r_{i}^{|X_{i}| - b_0}(X_{i})\notin \mathcal{X}_{J}$, and there is some $b_1$ such that $b_1 > b_0$ and $r_{i}^{|X_{i}|-  b_1}(X_{i}) \in \mathcal{X}_{J}$. Since $b_1 \leq |X_{i}| -1$, $b_0 < |X_{i}|-1$ which gives that $|X_{i}| - b_0 > 1$. Let $c_0$ and $c_1$ be such that $r_{i}^{c_0}( (I\setminus\mathcal{X}_{J})\cup \{r_{i}^{1}(\mathcal{X}_{J})\}) = r_{i}^{|X_{i}|-b_0}(X_{i})$ and $r_{i}^{c_1}( (I\setminus\mathcal{X}_{J})\cup \{r_{i}^{1}(\mathcal{X}_{J})\}) = r_{i}^{1}(\mathcal{X}_{J})$. We have that $c_0 > c_1$. Now since $c_0 \leq |I\setminus \mathcal{X}_{J}| + 1$, $c_1 \neq |I\setminus X_{J}| + 1$. Therefore $r_{i}^{1}(\mathcal{X}_{J}) \neq r_{i}^{|I\setminus X_{J}|+ 1}( (I\setminus\mathcal{X}_{J})\cup \{r_{i}^{1}(\mathcal{X}_{J})\})$, hence $\left< \mathcal{K}_{J}\right> = \mathbf{0}$.

Now suppose that $J = J_{m}$ for some $m$. Let $c_1$ be such that $ r_{i}^{c_1}(I) = r_{i}^{1}(\mathcal{X}_{J})$. We then have that for any $c < c_1$, $c \in J$, moreover, $r_{i}^{c}(I)\notin  (I \setminus \mathcal{X}_{J})\cup \{r_{i}^{1}(\mathcal{X}_{J})\}$. Thus $r_{i}^{1}(\mathcal{X}_{J}) = r_{i}^{|I \setminus \mathcal{X}_{J}| +1}( (I \setminus \mathcal{X}_{J})\cup \{r_{i}^{1}(\mathcal{X}_{J})\})$, for if not, then there would be $c_0 > c_1$ in which $r_{i}^{c_0}(I) \in (I \setminus \mathcal{X}_{J})\cup \{r_{i}^{1}(\mathcal{X}_{J})\}$, a contradiction.

From this, the right hand side of equation (\ref{updateprinciple}) becomes
\begin{equation}
    \sum_{J \subseteq \left[ |X_{i}| \right]}{(-1)}^{|J| + 1}\left< \mathcal{K}_{J}\right> = \sum_{m = 0}^{|X_{i}|-1}{(-1)}^{m}\left<\mathcal{K}_{J_{m}}\right>.
    \label{linear}
\end{equation}
We now have for $J = J_{m}$, $r_{i}^{1}(\mathcal{X}_{J}) = r_{i}^{|X_{i}| -m }(X_{i})$. Also when $m < |X_{i}| -1$, we have that
\begin{equation}
    r_{i}^{|I\setminus\mathcal{X}_{J}|}(I\setminus \mathcal{X}_{J}\cup\{r_{i}^{1}(\mathcal{X}_{J})\}) = r_{i}^{|I\setminus \mathcal{X}_{J}|} (I \setminus \mathcal{X}_{J}) = r_{i}^{|X_{i}| - (m + 1)}(X_{i}).
\end{equation}
But when $m = |X_{i}| - 1$, $J = [|X_{i}|]$, therefore
\begin{equation}
    r_{i}^{|I\setminus\mathcal{X}_{J}|}(I\setminus \mathcal{X}_{J}\cup\{r_{i}^{1}(\mathcal{X}_{J})\}) = r_{i}^{|I\setminus\mathcal{X}_{J}|}(I\setminus \mathcal{X}_{J}) = r_{i}^{|I\setminus X_{i}|}(I \setminus X_{i}).
\end{equation}
Now for $m < |X_{i}| - 1$, we can rewrite equation (\ref{updateprinciple}) to get
\begin{equation}
    \left<\mathcal{K}_{J_{m}}\right>_{i}^{k} = {(-1)}^{m}
    \begin{cases}
        1\ \text{for}\ k = \tesla (r_{i}^{|X_{i}|-(m+1)}(X_{i}),i-1) \\
        -1\ \text{for}\ k = \tesla(r_{i}^{|X_{i}| -m} (X_{i}),i-1)\\
        0\ \text{otherwise}.
    \end{cases}
\end{equation}
Now define
\begin{equation}
    u{(m)}_{i}^{k} =
    \begin{cases}
        1\ \text{for}\ k = \tesla (r_{i}^{|X_{i}|-(m+1)}(X_{i}),i-1) \\
        -1\ \text{for}\ k = \tesla(r_{i}^{|X_{i}| -m} (X_{i}),i-1)\\
        0\ \text{otherwise},
    \end{cases}
\end{equation}
for $m < |X_{i}| -1$
and
\begin{equation}
    u{(|X_{i}| - 1)}_i^k =
    \begin{cases}
        1\ \text{for}\ k = \tesla (r_{i}^{|I \setminus X_{i}|}(I \setminus X_{i}),i-1) \\
        -1\ \text{for}\ k = \tesla(r_{i}^{1} (X_{i}),i-1)\\
        0\ \text{otherwise},
    \end{cases}
\end{equation}
Taking, $u{(m)}_{i} = (u^1_{i}(m),u^2_{i}(m),\ldots,u^n_{i}(m))$, we can write
\begin{equation}
    \sum_{m = 0}^{|X_{i}|-1}{(-1)}^m\left<\mathcal{K}_{J_{m}}\right> = \sum_{m = 0}^{|X_{i}|-1} {(-1)}^m{(-1)}^m u{(m)}_i^k =  \sum_{m = 0}^{|X_{i}|-1} u{(m)}_i^k.
    \label{minus_cancel}
\end{equation}
Observe that
\begin{equation}
    u{(0)}_{i}^{k} + u{(1)}_{i}^{k} =
    \begin{cases}
        1\ \text{for}\ k = \tesla (r_{i}^{|X_{i}|  -2}(X_{i}),i-1) \\
        -1\ \text{for}\ k = \tesla(r_{i}^{|X_{i}|} (X_{i}),i-1)\\
        0\ \text{otherwise}.
    \end{cases}
\end{equation}
Applying this for all $m < |X_{i}| - 1$ gives
\begin{equation}
    \sum_{m = 0}^{|X_{i}| - 2} u{(m)}_{i}^{k}=
    \begin{cases}
        1\ \text{for}\ k = \tesla(r_{i}^{1}(X_{i}), i-1) \\
        -1\ \text{for}\ k = \tesla(r_{i}^{|X_{i}|}(X_{i}), i-1) \\
        0\ \text{otherwise}.
    \end{cases}
\end{equation}
So combining this with $u{(|X_{i}| - 1)}_{i}^{k}$, we get
\begin{equation}
    \sum_{m = 0}^{|X_{i}| - 1} u{(m)}_{i}^{k}=
    \begin{cases}
        1\ \text{for}\ k = \tesla(r_{i}^{|I \setminus X_{i}|}(I \setminus X_{i}), i-1) \\
        -1\ \text{for}\ k = \tesla(r_{i}^{|I|}(I), i-1) \\
        0\ \text{otherwise},
    \end{cases}
    \label{finish}
\end{equation}
since $r_{i}^{|X_{i}|}(X_{i}) = r_{i}^{|I|}(I)$.
Combining equation (\ref{finish}) with equations (\ref{minus_cancel}), (\ref{linear}), and (\ref{updateprinciple}) (and considering the components of each of those equations), we finally get,
\begin{equation}
    H_{i}\left[ k \right] - H_{i-1}\left[ k \right] =
    \begin{cases}
        1\ \text{for}\ k = \tesla(r_{i}^{|I \setminus X_{i}|}(I \setminus X_{i}), i-1) \\
        -1\ \text{for}\ k = \tesla(r_{i}^{|I|}(I), i-1) \\
        0\ \text{otherwise},
    \end{cases}
\end{equation}
proving the result ($X_{i} = A$ in the statement of the theorem).
\end{proof}

\subsection*{Algorithms}
Here we place pseudocode for the preliminary algorithms mentioned in the main text.
\pagebreak

\begin{algorithm}[H]\SetAlgoLined
\DontPrintSemicolon
\LinesNumbered
\KwIn{Trace $T$, Itemset $I$}
\KwResult{Histogram $H$}
    \tcp*[h]{Map from each set in the power set of $I$ to beginning of last gap}\;
    gap $\leftarrow$ empty map\;
    $H \leftarrow$ empty histogram\;
    \For{$A \subseteq I$}{
        gap$[A]\leftarrow$0\;
    }
    \tcp*[h]{Read through entire trace}\\
    \For{$i=0$ to $n-1$}{
        current$\leftarrow T[i]$\;
        \tcp*[h]{When item is seen update each gap}\;
        \If{current $\in I$}{
            \For{$A \subseteq I\ \wedge$ current $\in A$}{
            \label{line:subset}
                $\tesla \leftarrow i - 1 -\text{gap}[A]$\;
                gap$[A]$ $\leftarrow$ i\; $H[\tesla] \leftarrow H[\tesla] + (-1)^{|A|+1}$\;
        }
        }
    }
    \tcp*[h]{Handle gaps remaining at the end of the trace}\\
    \For{$A \subseteq I\ \wedge$ current $\in A$}
    {
        $\tesla \leftarrow n - 1 - \text{gap}[A]$\;
        $H[\tesla] \leftarrow H[\tesla] + (-1)^{|A|+1}$\;
    }
    \Return{$H$}\;
    \caption{{\tt GAPCOUNTING} }
    \label{algnaive}
    \end{algorithm}

Here we present the pseudocode for the pattern co-occurrence analysis. 
    
    \begin{algorithm}\SetAlgoLined
\DontPrintSemicolon
\LinesNumbered
\KwIn{Trace $T$, ItemSet $I$}
\KwResult{Co-occurrence of all window lengths}
    $H \leftarrow$ empty histogram, $cooc \leftarrow$ [], $\mathcal{S}\leftarrow$ empty book-stack\\;
    \For{$e \in I$}{
        $\mathcal{S}$ += (e, $-\infty$)\;
    }
    $p \gets |I|$,  $m \gets 1$\;
    \While{$\mathcal{S}.{\tt retrieve}(|I|) = \mathcal{S}.{\tt retrieve}(|I|-m)$}{
        $m \gets m+1$
    }
    $p \gets |I| - m$\;
    \For{$i=0$ to $n-1$}{
        $C \leftarrow \{e \in I\,|\,e[0] = T[i-|e|+1| ... e[|e|-1] = T[i]\}$\;
        $\text{min} \gets i$
        \For{$c \in C$}{
            \If{$\mathcal{S}.{\tt find}(c) < \text{min}$}{$\text{min} \gets \mathcal{S}.{\tt find}(c)$
            }
        }
\If{$\text{min} >  \mathcal{S}.{\tt retrieve}(p+1)$}{
                $f \gets i - \mathcal{S}.{\tt retrieve}(|I|)$, $s \gets i - \mathcal{S}.{\tt retrieve}(p+1)$\;
                $H[f] \gets H[f] + 1$, $H[s] \gets H[s] - 1$\;
            }
        \For{$\text{current} \in C$}{
            j $\leftarrow \mathcal{S}.{\tt find}(\text{current})$\;
        $\mathcal{S}.{\tt update}(j)$\;
        \tcp*[h]{Maintain partition}\\
            \If{$j = p = |I|$}{
                $m \gets 2$\;
                \While{$\mathcal{S}$.{\tt retrieve}(|I|-1) = $\mathcal{S}$.{\tt retrieve}(|I|-m)}{
                    $m \gets m+1$
                }
                $p \gets |I| - m$
            }
        \If{$p \leq j < |I|$}{
            $p \gets p + 1$
        }
        }
        }
        \tcp*[h]{Final gap from bottom of book-stack}\\
        $f \gets i- \mathcal{S}$.{\tt retrieve}(|I|)\;
        $H[f] \gets H[f]+1$\;
        \For{$x = 0$ to $|T|-|I|+1$}{
            $S_x \gets 0$\;
            \For{$k = x$ to $|T|$}{
                $S_x \leftarrow S_x + (k-x+1)H[k]$\;
            }
            $cooc[x] \leftarrow (|T| - x + 1) - S_x$\;
        }
    \Return{$cooc$}\;
    \caption{{\tt PAWLCO}}
    \label{algmult}
    \end{algorithm}

\end{document}